\tikzset{join/.code=\tikzset{after node path={%
\ifx\tikzchainprevious\pgfutil@empty\else(\tikzchainprevious)%
edge[every join]#1(\tikzchaincurrent)\fi}}}
\tikzset{>=stealth',every on chain/.append style={join},
         every join/.style={->}}
\tikzstyle{labeled}=[execute at begin node=$\scriptstyle,
\tikzstyle{vertex}=[circle, draw, fill=blue!20, inner sep=0pt, minimum width=15pt]
\tikzstyle{finalvertex}=[circle, draw, fill=red!30, inner sep=0pt, minimum width=15pt]
\tikzstyle{initialvertex}=[circle, draw, fill=green!30, inner sep=0pt, minimum width=15pt]
\tikzstyle{en}=[text=blue]
\title{Small-depth Multilinear Formula Lower Bounds for Iterated Matrix Multiplication, with Applications}
\author{Suryajith Chillara\\
  Department of CSE, IIT Bombay.\\  \texttt{suryajith@cse.iitb.ac.in}
  \and
  Nutan Limaye\\
  Department of CSE, IIT Bombay.\\ \texttt{nutan@cse.iitb.ac.in}
  \and
  Srikanth Srinivasan\\
  Department of Mathematics, IIT Bombay.\\
  \texttt{srikanth@math.iitb.ac.in}}
\newtheorem{theorem}{Theorem}
\newtheorem{corollary}[theorem]{Corollary}
\newtheorem{lemma}[theorem]{Lemma}
\newtheorem{observation}[theorem]{Observation}
\newtheorem{proposition}[theorem]{Proposition}
\newtheorem{definition}[theorem]{Definition}
\newtheorem{claim}[theorem]{Claim}
\newcommand{\prob}[2]{\mathop{\mathrm{Pr}}_{#1}[#2]}
\newcommand{\avg}[2]{\mathop{\textbf{E}}_{#1}[#2]}
\newcommand{\poly}{\mathop{\mathrm{poly}}}
\newcommand{\abs}[1]{\left|#1\right|}
\newcommand{\mc}[1]{\mathcal{#1}}
\newcommand{\rank}{\mathrm{rank}}
\newtheorem*{rep@theorem}{\rep@title}
\newcommand{\newreptheorem}[2]{%
\newenvironment{rep#1}[1]{%
 \def\rep@title{#2 \ref{##1}}%
 \begin{rep@theorem}}%
 {\end{rep@theorem}}}
\newcommand{\F}{\mathbb{F}}
\newcommand{\IMM}{\mathrm{IMM}}
\newcommand{\Vars}{\mathrm{Vars}}
\newcommand{\vars}{\mathrm{Vars}}
\newcommand{\supp}{\mathrm{Supp}}
\newcommand{\M}{{M}}
\newcommand{\Img}{\mathrm{Img}}
\newcommand{\simple}{simple}
\begin{document}

\maketitle
\begin{abstract}
The complexity of Iterated Matrix Multiplication is a central theme in Computational Complexity theory, as the problem is closely related to the problem of separating various complexity classes within $\mathrm{P}$. In this paper, we study the algebraic formula complexity of multiplying $d$ many $2\times 2$ matrices, denoted  $\IMM_{d}$, and show that the well-known divide-and-conquer algorithm cannot be significantly improved at any depth, as long as the formulas are multilinear.

Formally, for each depth $\Delta \leq \log d$, we show that any product-depth $\Delta$ multilinear formula for $\IMM_d$ must have size $\exp(\Omega(\Delta d^{1/\Delta})).$ It also follows from this that any multilinear circuit of product-depth $\Delta$ for the same polynomial of the above form must have a size of $\exp(\Omega(d^{1/\Delta})).$ In particular, any polynomial-sized multilinear formula for $\IMM_d$ must have depth $\Omega(\log d)$, and any polynomial-sized multilinear circuit for $\IMM_d$ must have depth $\Omega(\log d/\log \log d).$ Both these bounds are tight up to constant factors.

Our lower bound has the following consequences for multilinear formula complexity. 

\begin{enumerate}
\item \textbf{Depth-reduction}: A well-known result of Brent (JACM 1974) implies that any formula of size $s$ can be converted to one of size $s^{O(1)}$ and depth $O(\log s)$; further, this reduction continues to hold for multilinear formulas. On the other hand, our lower bound implies that any depth-reduction in the multilinear setting cannot reduce the depth to $o(\log s)$ without a superpolynomial blow-up in size.

\item \textbf{Separations from general formulas}: Shpilka and Yehudayoff (FnTTCS 2010) asked whether general formulas can be more efficient than multilinear formulas for computing multilinear polynomials. Our result, along with a non-trivial upper bound for $\IMM_{d}$ implied by a result of Gupta, Kamath, Kayal and Saptharishi (SICOMP 2016), shows that for any size $s$ and product-depth $\Delta = o(\log s),$ general formulas of size $s$ and product-depth $\Delta$ cannot be converted to multilinear formulas of size $s^{\omega(1)}$ and product-depth $\Delta,$ when the underlying field has characteristic zero.
\end{enumerate}

\end{abstract}

\section{Introduction}
Algebraic Complexity theory is the study of the complexity of those computational problems that can be phrased as computing a multivariate polynomial $f(x_1,\ldots,x_N)\in \mathbb{F}[x_1,\ldots, x_N]$ over elements $x_1,\ldots,x_N\in \F.$ Many central algorithmic problems such as the Determinant, Permanent, Matrix product etc. can be cast in this framework. The natural computational models that we consider in this setting are models such as \emph{Algebraic circuits}, \emph{Algebraic Branching Programs} (ABPs), and \emph{Algebraic formulas} (or just formulas), all of which use the natural algebraic operations of $\F[x_1,\ldots,x_N]$ to compute the polynomial $f$. These models have by now been the subject of a large body of work with many interesting upper bounds (i.e. circuit constructions) as well as lower bounds (i.e. impossibility results). (See, e.g. the surveys~\cite{sy,github} for an overview of many of these results.)

Despite this, many fundamental questions remain unresolved. An important example of such a question is that of proving lower bounds on the size of formulas for the \emph{Iterated Matrix Multiplication} problem, which is defined as follows. Given $d$ $n\times n$ matrices $M_1,\ldots,M_d$, we are required to compute (an entry of) the product $M_1\cdots M_d$; we refer to this problem as $\IMM_{n,d}.$ Proving superpolynomial lower bounds on the size of formulas for this problem is equivalent to separating the power of polynomial-sized ABPs from polynomial-sized formulas, which is the algebraic analogue of separating the Boolean complexity classes $\mathrm{NL}$ and $\mathrm{NC}^1.$

A standard divide-and-conquer algorithm yields the best-known formulas for $\IMM_{n,d}.$ More precisely, for any $\Delta \leq \log d$, this approach yields a formula of product-depth\footnote{The \emph{product-depth} of an arithmetic circuit or formula is the maximum number of product gates on a path from output to input. If the product-depth of a circuit or formula is $\Delta$, then its depth can be assumed to be at least $2\Delta-1$ and at most $2\Delta +1.$} $\Delta$ and size $n^{O(\Delta d^{1/\Delta})}$ for $\IMM_{n,d}$ and choosing $\Delta = \log d$ yields the current best formula upper bound of $n^{O(\log d)}$, which has not been improved in quite some time. On the other hand, separating the power of ABPs and formulas is equivalent to showing that $\IMM_{n,d}$ does not have formulas of size $\poly(nd).$

The Iterated Matrix Multiplication problem has many nice features that render its complexity an interesting object to study. For one, it is the algebraic analogue of the Boolean reachability problem, and thus any improved formula upper bounds for $\IMM_{n,d}$ could lead to improved Boolean circuit upper bounds for the reachability problem, which would resolve a long-standing open problem in that area. For another, this problem has strong self-reducibility properties, which imply that improving on the simple divide-and-conquer approach to obtain formulas of size $n^{o(\log d)}$ for any $d$ would lead to improved upper bounds for all $D > d$; this implies that the lower-degree variant is no easier than the higher-degree version of the problem, which can be very useful (e.g. for homogenization~\cite{Raz13}). Finally, the connection to the Reachability problem imbues $\IMM_{n,d}$ with a rich combinatorial structure via its graph theoretic interpretation, which has been used extensively in lower bounds for depth-$4$ arithmetic circuits~\cite{FLMS,KLSS,KS14,KNS16,KST16}. 

We study the formula complexity of this problem in the \emph{multilinear} setting, which restricts the underlying formulas to only compute multilinear polynomials at intermediate stages of computation. Starting with the breakthrough work of Raz~\cite{Raz}, many lower bounds have been proved for multilinear models of computation~\cite{ry08,ry09,RSY08,DMPY12}. Further, it is known by a result of Dvir, Malod, Perifel and Yehudayoff~\cite{DMPY12} that multilinear ABPs are in fact superpolynomially more powerful than multilinear formulas. Unfortunately, however, this does not imply any non-trivial lower bound for Iterated Matrix Multiplication (see the Related Work section below), and as far as we know, it could well be the case that there are multilinear formulas that beat the divide-and-conquer approach in computing this polynomial.

Here, we are able to show that this is not the case for the problem of multiplying $2\times 2$ matrices (and by extension $c\times c$ matrices for any constant $c$) at any product-depth. Our main theorem is the following.

\begin{theorem}
\label{thm:mainlbd-informal}
For $\Delta\leq \log d$, any product-depth $\Delta$ multilinear formula that computes $\IMM_{2,d}$ must have size $2^{\Omega(\Delta d^{1/\Delta})}$.
\end{theorem}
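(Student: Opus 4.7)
The plan is to adapt the partial-derivative matrix rank technique of Raz~\cite{Raz}, the principal tool for multilinear formula lower bounds. The proof hinges on designing a rank-based measure that is sub-additive at sum gates, grows in a controlled way at product gates, and is maximal for $\IMM_{2,d}$; an amortization across the $\Delta$ product-depth levels then yields the tight exponent $\Delta\, d^{1/\Delta}$.

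First, I would pick a random partition of the $4d$ variables into sets $Y$ and $Z$, likely at the ``matrix level'' (each matrix $M_i$ assigned in its entirety to one side, by an independent fair coin). For a multilinear $f \in \F[Y\cup Z]$, let $M_f$ denote the partial-derivative matrix with rows and columns indexed by multilinear monomials in $Y$ and $Z$ respectively, with entries given by the corresponding coefficients of $f$. The standard facts: $\rank(M_{f+g}) \le \rank(M_f)+\rank(M_g)$ at sum gates; for a syntactic-multilinear product (with $\var{f}\cap \var{g}=\emptyset$), $M_{fg}$ equals $M_f\otimes M_g$ up to permutation, so $\log_2 \rank$ is additive at product gates; and $\rank(M_f) \le 2^{\min(|Y\cap \var{f}|,|Z\cap \var{f}|)}$. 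Without loss of generality (up to a polynomial blow-up) the formula may be assumed syntactically multilinear.

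The central identity is the ``imbalance penalty'' at product gates. Writing $V_h := \var h$ and $\chi_j := |Y\cap V_{g_j}|-|Z\cap V_{g_j}|$, at a product gate $g = g_1\cdots g_t$ the elementary identity $\min(a,b) = (a+b)/2 - |a-b|/2$ gives
\[
\min(|Y\cap V_g|,|Z\cap V_g|) - \sum_{j=1}^t \min(|Y\cap V_{g_j}|,|Z\cap V_{g_j}|) \;=\; \tfrac12\Bigl(\sum_{j=1}^t |\chi_j| - \bigl|\sum_{j=1}^t \chi_j\bigr|\Bigr) \;\ge\; 0.
\]
Each product gate thus ``loses'' this amount of rank relative to the pure tensor structure, and under the random matrix-level partition $|\chi_j|$ concentrates around $\Theta(\sqrt{|V_{g_j}|})$ by Chernoff-type concentration. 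Complementarily, one would show that $\IMM_{2,d}$ attains $\log_2 \rank(M_{\IMM_{2,d}})$ within an additive $O(\log d)$ of the maximum achievable by any multilinear degree-$d$ polynomial with the given variable split, using the path-sum representation and the block-Kronecker structure inherited from each $2\times 2$ matrix.

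The final amortization: decompose the size-$s$ depth-$\Delta$ formula into its SOP form as $F = \sum_\tau F_\tau$; each product term $F_\tau$ has factors whose variable sets collectively cover the $d$ matrices, and by AM-GM on the fan-ins $t_1,\ldots,t_\Delta$ along the depth axis (with $\prod_i t_i \ge d$), $\sum_i t_i \ge \Delta\, d^{1/\Delta}$. Translating this combinatorial bound through the imbalance penalty yields an $\Omega(\Delta\, d^{1/\Delta})$ rank deficit for each term $F_\tau$, and the inequality $\log_2 \rank(M_F) \le \log s + \max_\tau \log_2 \rank(M_{F_\tau})$ produces $\log s = \Omega(\Delta\, d^{1/\Delta})$. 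The main obstacle is executing this amortization tightly: one needs a measure that (i) captures the per-gate rank deficit precisely under the random partition, (ii) tolerates topologically irregular formulas with non-uniform fan-ins and unbalanced subtrees, and (iii) yields the sharp AM-GM bound uniformly across $1 \le \Delta \le \log d$. Designing this measure together with the convexity argument that converts a large cumulative sum of square-root imbalances into the clean exponent $\Delta\, d^{1/\Delta}$ is the principal technical challenge.
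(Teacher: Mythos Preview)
Your outline has the right high-level shape (partial-derivative rank, subadditivity, product-gate imbalance), but the ``hard polynomial has high rank'' step fails for $\IMM_{2,d}$ under a pure random partition, and this is not a technicality. With your matrix-level partition, whenever a run of $k$ consecutive matrices lands on the same side, their product is a $2\times 2$ matrix over $\F[Y]$ (or $\F[Z]$); the whole run contributes only $O(1)$ to $\log_2\rank$, not $\Theta(k)$. Concretely, if the first $d/2$ matrices go to $Y$ and the rest to $Z$, then $\IMM_{2,d}$ has rank at most $2$. Under a uniformly random matrix-level partition the number of alternations is $\Theta(d)$, but even then $\log_2\rank(\M_{(Y,Z)}(\IMM_{2,d})) \le d/2$, whereas $\min(|Y|,|Z|)\approx 2d$. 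Your upper bound on each term in the SOP decomposition is only $2^{\min(|Y|,|Z|)-\Omega(\Delta d^{1/\Delta})}$, which for $\Delta\ge 2$ is still far above $2^{d/2}$. So the two sides of the argument never meet, and your claim that $\IMM_{2,d}$ is within $O(\log d)$ of the maximum is simply false for these partitions. This is exactly why the paper does \emph{not} use a partition: it designs a random \emph{restriction} $\rho:X\to Y\cup Z\cup\{0,1\}$ that sets most variables to $0/1$ along a random path in the underlying branching-program graph, so that $\IMM_{2,d}|_\rho = \prod_{i\le m}(1+y_iz_i)$ deterministically attains the full rank $2^m$; all the work then goes into showing that this restriction still kills $t$-product and $t$-simple polynomials.

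There is a second, independent gap in your amortization. Your AM--GM observation that $\sum_i t_i \ge \Delta d^{1/\Delta}$ when $\prod_i t_i \ge d$ is correct, but the quantity you actually control via imbalance is $\sum_j |\chi_j|$, and the link between fan-in sums and imbalance sums is not the one you suggest: for a single product gate with factors of sizes $n_1,\dots,n_k$ the expected $\sum_j |\chi_j|$ is $\Theta(\sum_j \sqrt{n_j})$, not $\Theta(k)$, and iterating this over $\Delta$ levels does not cleanly yield $\Delta d^{1/\Delta}$. The paper separates these concerns: first a purely structural \emph{product lemma} shows that any size-$s$ product-depth-$\Delta$ syntactic multilinear formula is a sum of at most $s$ polynomials each of which is either a $t$-product or a $t$-\simple\ polynomial with $t=\Omega(\Delta d^{1/\Delta})$ (this is where the AM--GM/convexity argument lives, and it requires a careful induction on depth together with a ``peel off a bottom gate'' case analysis); only then is the rank measure applied, and the rank deficit per term is $\Omega(t)$ simply because each of the $t$ factors is independently imbalanced with constant probability under the restriction. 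You would need both pieces---the restriction and the product lemma---to make the argument go through.
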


This lower bound strengthens a result of Nisan and Wigderson~\cite{nw1997} who prove a similar lower bound in the more restricted \emph{set-multilinear} setting. 

Our result is also qualitatively different from the previous lower bounds for multilinear formulas since $\IMM_{2,d}$ does in fact have polynomial-sized formulas of product-depth $O(\log d)$ (via the divide-and-conquer approach), whereas we show a superpolynomial lower bound for product-depth $o(\log d)$. This observation leads to interesting consequences for multilinear formula complexity in general, which we now describe. 

\subparagraph*{Depth Reduction.}  An important theme in Circuit complexity is the interplay between the size of a formula or circuit and its depth~\cite{brent,spira,VSBR,av,Tav13}. In the context of algebraic formulas, a result of Brent~\cite{brent} says that any formula of size $s$ can be converted into another of size $s^{O(1)}$ and depth $O(\log s).$ Further, the proof of this result also yields the same statement for multilinear formulas. 

Can the result of Brent be improved? Theorem~\ref{thm:mainlbd-informal} implies that the answer is no in the multilinear setting. More precisely, since the $\IMM_{2,d}$ polynomial (over $O(d)$ variables) has formulas of size $\poly(d)$ and depth $O(\log d)$ but no formulas of size $d^{O(1)}$ and depth $o(\log d)$ (by Theorem~\ref{thm:mainlbd-informal}), we see that any multilinear depth-reduction procedure that reduces the depth of a size-$s$ formula to $o(\log s)$ must incur a superpolynomial blow-up in size. This strengthens a result of Raz and Yehudayoff~\cite{ry09}, whose results imply that any depth-reduction of multilinear formulas to depth $o(\sqrt{\log s}/\log \log s)$ should incur a superpolynomial blow-up in size. It is also an analogue in the algebraic setting of some recent results proved for Boolean circuits~\cite{Ross15, RS17}.

\subparagraph*{Multilinear vs. general formulas.} Shpilka and Yehudayoff~\cite{sy} ask the question of whether general formulas can be more efficient at computing multilinear polynomials than multilinear formulas. This is an important question, since we have techniques for proving lower bounds for multilinear formulas, whereas the same question for general formulas (or even depth-$3$ formulas over large fields) remains wide open. 

We are able to make progress towards this question here by showing a separation between the two models for small depths when the underlying field has characteristic zero. We do this by using Theorem~\ref{thm:mainlbd-informal} in conjunction with a (non-multilinear) formula \emph{upper bound} for $\IMM_{2,d}$ over fields of characteristic zero due to Gupta et al.~\cite{GKKSdepth3}. In particular, the result of Gupta et al.~\cite{GKKSdepth3} implies that for any depth $\Delta,$ the polynomial $\IMM_{2,d}$ has formulas of product depth $\Delta$ and size $2^{O(\Delta d^{1/2\Delta})},$ which is considerably smaller than our lower bound in the multilinear case for small $\Delta.$ From this, it follows that for any size parameter $s$ and product-depth $\Delta = o(\log s),$ general formulas of size $s$ and product-depth $\Delta$ cannot be converted to multilinear formulas of size $s^{\omega(1)}$ and product-depth $\Delta$. Improving our result to allow for $\Delta = O(\log s)$ would resolve the question entirely.

\subparagraph*{Related Work.} The multilinear formula model has been the focus of a large body of work on Algebraic circuit lower bounds. Nisan and Wigderson~\cite{nw1997} proved some of the early results in this model by showing size lower bounds for small-depth \emph{set-multilinear}\footnote{Set-multilinear circuits are further restrictions of multilinear circuits. A set-multilinear circuit for $\IMM_{n,d}$ is defined by the property that each intermediate polynomial computed must be a linear combination of monomials that contain exactly one variable from each matrix $M_i$ ($i\in S$), for some choice of $S \subseteq [d].$} circuits computing $\IMM_{2,d}$.  They showed that any product-depth $\Delta$ circuit for $\IMM_{2,d}$ must have a size of $2^{\Omega(d^{1/\Delta})}$ matching the upper bound from the divide-and-conquer algorithm for $\Delta = o(\log d/\log \log d)$. Our lower bounds for multilinear formulas imply similar lower bounds for multilinear circuits of product-depth $\Delta$.

Raz~\cite{Raz} proved the first superpolynomial lower bound for multilinear formulas by showing an $n^{\Omega(\log n)}$ lower bound for the $n\times n$ Determinant and Permanent polynomials. This was further strengthened by the results of Raz~\cite{Raznc2nc1} and Raz and Yehudayoff~\cite{ry08} to a similar lower bound for an explicit polynomial family that has polynomial-sized multilinear \emph{circuits}. In particular, these results show the tightness of the depth-reduction procedure for algebraic \emph{circuits} in the multilinear setting~\cite{VSBR,ry08}.

Similar polynomial families were also used in the work of Raz and Yehudayoff~\cite{ry09} to prove \emph{exponential} lower bounds for multilinear constant-depth circuits. By proving a tight lower bound for depth-$\Delta$ circuits computing an explicit polynomial (similar to the construction of Raz~\cite{Raznc2nc1}), Raz and Yehudayoff~\cite{ry09} showed superpolynomial separations between  multilinear circuits of different depths. 

In particular, the result of Raz and Yehudayoff~\cite{ry09} implies that the polynomial families of~\cite{Raznc2nc1,ry08}, which have formulas of size $n^{O(\log n)}$, cannot be computed by formulas of size less than some $s(n) = n^{\omega(\log n)}$ if the product-depth $\Delta = o(\log n/\log \log n).$ This yields the superpolynomial separation between formulas of size $s$ and depth $o(\sqrt{\log s}/\log \log s)$ alluded to above. Unfortunately, these polynomials also have nearly optimal formulas of depth just $O(\log n) = O(\sqrt{\log s})$, so they cannot be used to obtain the optimal size $s$ vs depth $o(\log s)$ separation we obtain here.

Dvir et al.~\cite{DMPY12} showed that there is an explicit polynomial on $n$ variables that has multilinear ABPs of size $\poly(n)$ but no multilinear formulas of size less than $n^{\Omega(\log n)}.$ One might hope that this yields a superpolynomial lower bound for multilinear formulas computing $\IMM_{N,d}$ for some $N,d$ but this unfortunately does not seem to be the case. The reason for this is that while any polynomial $f$ on $n$ variables that has an ABP of size $\poly(n)$ can be reduced via variable substitutions to $\IMM_{N,d}$ for $N,d = n^{O(1)}$, this reduction might substitute different variables in the $\IMM_{N,d}$ polynomial by the same variable $x$ of $f$ and in the process destroy multilinearity. 

Gupta et al.~\cite{GKKSdepth3} showed the surprising result that general (i.e. non-multilinear) formulas of depth-$3$ can beat the divide-and-conquer approach for computing $\IMM_{n,d},$ when the underlying field has characteristic zero. Their result implies that, in this setting, $\IMM_{n,d}$ has product-depth $1$ formulas of size $n^{O(\sqrt{d})}$, as opposed to the $n^{O(d)}$-sized formula that is obtained from the traditional divide-and-conquer approach.  Using the self-reduction properties of $\IMM_{n,d}$, this can be easily seen to imply the existence of $n^{O(\Delta d^{1/2\Delta})}$-sized formulas of product-depth $\Delta.$ This construction uses the fact that the formulas are allowed to be non-multilinear. Our result shows that this cannot be avoided.

\subparagraph*{Proof Overview.} The proof follows a two-step process as in ~\cite{sy,DMPY12}.

The first step is a ``product lemma'' where we show that any multilinear polynomial $f$ on $n$ variables that has a small multilinear formula can also be computed as a sum of a small number of polynomials each of which is a product of many polynomials on disjoint sets of variables; if such a term is the product of $t$ polynomials, we call it a $t$-product polynomial.\footnote{The polynomials in our decomposition can also have a different form which we choose to ignore for now.} It is known~\cite[Lemma~3.5]{sy} that if $f$ has a formula of size $s$, then we can ensure a decomposition into a sum of at most $s$ many $\Omega(\log n)$-product polynomials. We show that if the formula further is known to have depth $\Delta$ then the number of factors can be increased to $\Omega(\Delta n^{1/\Delta})$. In particular, note that this is $\omega(\log n)$ as long as $\Delta = o(\log n)$: this allows us to obtain superpolynomial lower bounds for up to this range of parameters. 

Similar lemmas were already known in the small-depth setting~\cite{ry09}, but they do not achieve the parameters of our lemma here. However, the lemma of~\cite{ry09} satisfies the additional condition that every factor of each $t$-product polynomial in the decomposition depends on a ``large'' number of variables. Here, we only get that each factor depends on a non-zero number of variables, but this is sufficient to prove the lower bound we want. 

The second step is to use this decomposition to prove a lower bound. Specifically, we would like to say that the polynomial $\IMM_{2,d}$ has no small decomposition into terms of the above form. This is via a rank argument as in Raz~\cite{Raz}. Specifically, we partition the variables $X$ in our polynomial into two sets $Y$ and $Z$ and consider any polynomial $f(X)$ as a polynomial in the variables in $Y$ with coefficients from $\F[Z].$ The dimension of the space of coefficients (as vectors over the base field $\F$) is considered a measure of the complexity of $f$. 

It is easy to come up with a partition of the underlying variable set $X$ into $Y,Z$ so that the complexity of $\IMM_{2,d}$ is as large as possible. Unfortunately, we also have simple multilinear formulas that have maximum dimension w.r.t. this partition. Hence, this notion of complexity is not by itself sufficient to prove a lower bound. At this point, we follow an idea of Raz~\cite{Raz} and show something stronger for $\IMM_{2,d}$: we show that its complexity is quite \emph{robust} in the sense that it is full rank w.r.t. many different partitions.

More precisely, we carefully design a large space of \emph{restrictions} $\rho:X\rightarrow Y\cup Z\cup \F$ such that for any restriction $\rho,$ the resulting substitution of $\IMM_{2,d}$ continues to have high complexity w.r.t. the measure defined above. These restrictions are motivated by the combinatorial structure of the underlying polynomial, specifically the connection to Graph Reachability. 

The last step is to show that, for any $t$-product polynomial $f$, a random restriction from the above space of restrictions transforms it with high probability into a polynomial whose measure is small. Once we have this result, it follows that given a small multilinear formula, there is a restriction that transforms each term in its decomposition (obtained from the product lemma) into a small complexity polynomial. The subadditivity of rank then shows that the entire formula now has small complexity, and hence it cannot be computing $\IMM_{2,d}$ which by the choice of our restriction has high complexity.

\section{Preliminaries}

\subsection{Basic setup}
\label{sec:basic}

Unless otherwise stated, let $\F$ be an arbitrary field. Let $d\in \mathbb{N}$ a growing integer parameter. We define $X^{(1)},\ldots,X^{(d)}$ to be disjoint sets of variables where each $X^{(i)} = \{x^{(i)}_{j,k}\ |\ j,k\in [2]\}$ is a set of four variables that we think of forming a $2\times 2$ matrix. Let $X = \bigcup_{i\in [d]}X^{(i)}$.

A polynomial $P\in \F[X]$ is called \emph{multilinear} if the degree of $P$ in each variable $x\in X$ is at most $1$. 
We define the multilinear polynomial $\IMM_d \in \F[X]$ as follows. Consider the matrices $M^{(1)},\ldots,M^{(d)}$ where the entries of $M^{(i)}$ are the variables of $X^{(i)}$ arranged in the obvious way. Define the matrix $M = M^{(1)}\cdots M^{(d)}$; the entries of $M$ are multilinear polynomials over the variables in $X$. We define 
$$\IMM_d = M(1,1) + M(1,2),$$ 
i.e. the sum of the $(1,1)$th and $(1,2)$th entries of $M$. Note, in particular, that the polynomial $\IMM_d$ does not depend on the variables $x^{(1)}_{2,1}$ and $x^{(1)}_{2,2}$. 

This is a slight variant of the Iterated Matrix Multiplication polynomial seen in the literature, as it is usually defined to be either the matrix entry $M({1,1})$ or the trace $M({1,1}) + M({2,2})$. Our results can easily be seen to hold for these variants, but we deal with the definition above for some technical simplicity.

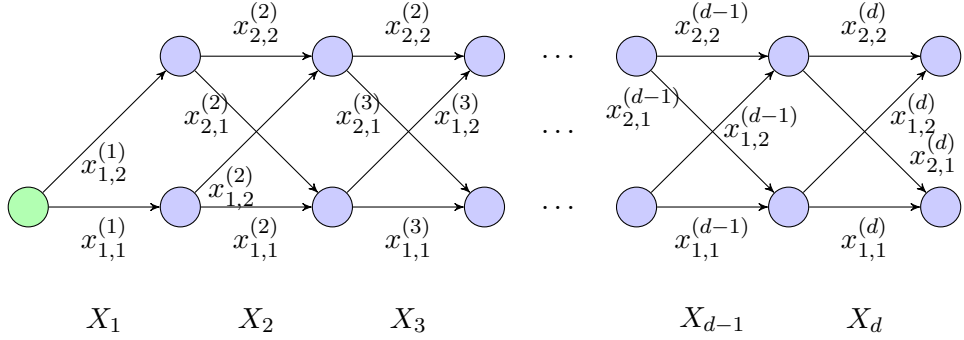
\begin{figure}
  \begin{center}
\begin{tikzpicture}[scale=1]
  \centering 
  \node[initialvertex] (v0) at (0,0) {};
  \node[vertex] (v11) at (2,0) {};
  \node[vertex] (v21) at (2,2) {};
  \node[vertex] (v12) at (4,0) {};
  \node[vertex] (v22) at (4,2) {};
  \node[vertex] (v13) at (6,0) {};
  \node[vertex] (v23) at (6,2) {};
  \node (mid1) at (7,0) {$\dots$};
  \node (mid2) at (7,2) {$\dots$};
  \node (mid3) at (7,1) {$\dots$};
  \node[vertex] (v15) at (8,0) {};
  \node[vertex] (v25) at (8,2) {};
  \node[vertex] (v16) at (10,0) {};
  \node[vertex] (v26) at (10,2) {};
  \node[vertex] (v17) at (12,0) {};
  \node[vertex] (v27) at (12,2) {};
  \path[->]
  (v0) edge node[below] {$x^{(1)}_{1,1}$} (v11)
  (v0) edge node[below] {$x^{(1)}_{1,2}$} (v21)
  (v11) edge node[below,pos=.3] {$x^{(2)}_{1,2}$} (v22)
  (v11) edge node[below] {$x^{(2)}_{1,1}$} (v12)
  (v21) edge node[above] {$x^{(2)}_{2,2}$} (v22)
  (v21) edge node[below, pos=.1] {$x^{(2)}_{2,1}$} (v12)
  (v12) edge node[below] {$x^{(3)}_{1,1}$} (v13)
  (v12) edge node[below, pos=.9] {$x^{(3)}_{1,2}$} (v23)
  (v22) edge node[below, pos=.1] {$x^{(3)}_{2,1}$} (v13)
  (v22) edge node[above] {$x^{(2)}_{2,2}$} (v23)
  (v15) edge node[below] {$x^{(d-1)}_{1,1}$} (v16)
  (v15) edge node[below, pos=.9, yshift=-5pt] {$x^{(d-1)}_{1,2}$} (v26)
  (v25) edge node[below, pos=.05, xshift=-5pt] {$x^{(d-1)}_{2,1}$} (v16)
  (v25) edge node[above] {$x^{(d-1)}_{2,2}$} (v26)
  (v16) edge node[below] {$x^{(d)}_{1,1}$} (v17)
  (v16) edge node[below, pos=.9] {$x^{(d)}_{1,2}$} (v27)
  (v26) edge node[above, pos=.95, xshift=5pt] {$x^{(d)}_{2,1}$}(v17)
  (v26) edge node[above] {$x^{(d)}_{2,2}$} (v27);

  \node (label1) at (1,-1.5) {$X_1$};
  \node (label2) at (3,-1.5) {$X_2$};
  \node (label3) at (5,-1.5) {$X_3$};
  \node (labeldminus1) at (9,-1.5) {$X_{d-1}$};
  \node (labeld) at (11,-1.5) {$X_d$};
\end{tikzpicture}
\caption{The directed acyclic graph $G_d$ that defines the polynomial $\IMM_d$ with its labeling.}
\label{fig:IMM}
\end{center}
\end{figure}

Another standard way of defining the polynomial $\IMM_d$ is via graphs. Define the edge-labelled directed acyclic graph $G_d= (V,E,\lambda)$ as follows: the vertex set $V$ is defined to be the disjoint union of vertex sets $V^{(0)},\ldots,V^{(d)}$ where $V^{(i)} = \{v^{(i)}_1,v^{(i)}_2\}.$ The edge set $E$ is the set of all possible edges from some set $V^{(i)}$ to $V^{(i+1)}$ (for $i < d$). The labelling function $\lambda:E\rightarrow X$ is defined by $\lambda((v^{(i)}_j,v^{(i+1)}_k)) = x^{(i+1)}_{j,k}.$ 
See Figure~\ref{fig:IMM} for a depiction of this graph.

Given a path $\pi$ in the graph $G_d$, $\lambda(\pi)$ is defined to be the product of all labels of edges in $\pi$. In this notation, $\IMM_d$ can be seen to be the following.
\begin{equation}
\label{eq:imm-def}
\IMM_d = \sum_{\substack{\text{paths $\pi$ from $v^{(0)}_1$}\\\text{to $v^{(d)}_1$ or $v^{(d)}_2$}}} \lambda(\pi) = \sum_{\pi_1,\ldots,\pi_d\in \{1,2\}} x^{(1)}_{1,\pi_1}x^{(2)}_{\pi_1,\pi_2}\cdots x^{(d)}_{\pi_{d-1},\pi_d}
\end{equation}

\subsection{Multilinear formulas and circuits}
\label{sec:mlsetup}

We refer the reader to the standard resources (e.g.~\cite{sy,github}) for basic definitions related to algebraic circuits and formulas. Having said that, we do make a few remarks.

\begin{itemize}
\item All the gates in our formulas and circuits will be allowed to have \emph{unbounded} fan-in.
\item  The size of a formula or circuit will refer to the number of gates (including input gates) in it, and depth of the formula or circuit will refer to the maximum number of product gates on a path from the input gate to output gate.
\item  Further, the \emph{product-depth} of the formula or circuit (as in \cite{ry08}) will refer to the maximum number of product gates on a path from the input gate to output gate. Note that the product depth of a formula or circuit can be assumed to be within a factor of two of the overall depth (by collapsing sum gates if necessary). 
\end{itemize}

\subparagraph*{Multilinear circuits and formulas.} An algebraic formula $F$ (resp. circuit $C$) computing a polynomial from $\F[X]$ is said to be \emph{multilinear} if each gate in the formula (resp. circuit) computes a multilinear polynomial. Moreover, a formula $F$ is said to be \emph{syntactic multilinear} if for each multiplication gate $\Phi$ of $F$ with children $\Psi_1,\ldots,\Psi_t$, we have 
$\supp(\Psi_i)\cap \supp(\Psi_j) = \emptyset \text{ for each $i\neq j$},$ 
where $\supp(\Phi)$ denotes the set of variables that appear in the subformula rooted at $\Phi.$ Finally, for $\Delta \geq 1$, we say that a multilinear formula (resp. circuit) is a $(\Sigma\Pi)^{\Delta}\Sigma$ formula (resp. circuit) if the output gate is a sum gate and along any path, the sum and product gates alternate, with each product gate appearing exactly $\Delta$ times and the bottom gate being a sum gate. We can define $(\Sigma\Pi)^{\Delta}, \Sigma\Pi\Sigma, \Sigma\Pi\Sigma\Pi$ formulas and circuits similarly. 

For a gate $\Phi$ in a syntactically multilinear formula, we define a set of variables $\Vars(\Phi)$ in a top-down fashion as follows.
\begin{definition}\label{def:Vars}
Let $C$ be a syntactically multilinear formula computing a polynomial on the variable set $X$.  For the output gate $\Phi$, which is a sum gate, we define $\Vars(\Phi) = X$. If $\Phi$ is a sum gate with children $\Psi_1, \ldots, \Psi_k$ and $\Vars(\Phi) =S \subseteq X$, then for each $1\leq i \leq k$, $\Vars(\Psi_i) = S$. If $\Phi$ is a product gate with children $\Psi_1, \ldots \Psi_k$ and $\Vars(\Phi) = S \subseteq X$, then $\Vars(\Psi_i) = \supp(\Psi_i)$ for $1 \leq i \leq k-1$ and $\Vars(\Psi_k) = S \setminus \left(\cup_{i=1}^{k-1} \Vars(\Psi_i)\right)$.
\end{definition}

It is easy to see that $\Vars(\cdot)$ satisfies the properties listed in the following proposition.

\begin{proposition}
\label{prop:vars}
For each gate $\Phi$ in a syntactically multilinear formula $C$, let $\Vars(\Phi)$ be defined as in Definition~\ref{def:Vars} above. 

\begin{enumerate}
\item For any gate $\Phi$ in $C$, $\supp(\Phi) \subseteq \Vars(\Phi)$. 
\item If $\Phi$ is an sum gate, with children $\Psi_1, \Psi_2, \ldots, \Psi_k$, then $\forall i \in [k]$, $\Vars(\Psi_i) = \Vars(\Phi)$. 
\item If $\Phi$ is a product gate, with children $\Psi_1, \Psi_2, \ldots, \Psi_k$, then $\Vars(\Phi) = \cup_{i=1}^k \Vars(\Psi_i)$ and the sets $\Vars(\Psi_i)$ ($i\in [k]$) are pairwise disjoint.
\end{enumerate}
\end{proposition}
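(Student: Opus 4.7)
The plan is to establish parts (1), (2), and (3) simultaneously by top-down induction on the distance of $\Phi$ from the output gate of $C$. The invariant I carry through is part (1), namely $\supp(\Phi)\subseteq \Vars(\Phi)$; at the inductive step, I will use this invariant at $\Phi$ to establish parts (2) and (3) for $\Phi$, and these in turn re-establish the invariant at each child. The base case is the output gate itself, where $\Vars(\Phi)$ is defined to be all of $X$, so $\supp(\Phi)\subseteq X = \Vars(\Phi)$ holds trivially.

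For the inductive step, assume $\supp(\Phi)\subseteq \Vars(\Phi)$. If $\Phi$ is a sum gate with children $\Psi_1,\ldots,\Psi_k$, then $\Vars(\Psi_i)=\Vars(\Phi)$ holds by the very definition, giving part (2); and since $\supp(\Psi_i)\subseteq \supp(\Phi)\subseteq \Vars(\Phi)=\Vars(\Psi_i)$, the invariant is preserved at each child. If instead $\Phi$ is a product gate with $\Vars(\Phi)=S$, I appeal to two facts: (a) the syntactic multilinearity condition, which gives $\supp(\Psi_i)\cap \supp(\Psi_j)=\emptyset$ whenever $i\ne j$, and (b) the inductive hypothesis together with $\supp(\Phi)=\bigcup_i\supp(\Psi_i)$, which yields $\supp(\Psi_i)\subseteq S$ for every $i$. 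For $i<k$, $\Vars(\Psi_i)=\supp(\Psi_i)$ by definition, so part (1) at $\Psi_i$ is immediate. For $\Psi_k$, I have $\Vars(\Psi_k)=S\setminus\bigcup_{i<k}\supp(\Psi_i)$; combining (a) with $\supp(\Psi_k)\subseteq S$, the variables in $\supp(\Psi_k)$ avoid each $\supp(\Psi_i)$ for $i<k$ and lie in $S$, so $\supp(\Psi_k)\subseteq \Vars(\Psi_k)$. Taking the union of $\Vars(\Psi_1),\ldots,\Vars(\Psi_k)$ recovers $S$ (using $\supp(\Psi_i)\subseteq S$ for $i<k$), and the pairwise disjointness follows from (a) among the first $k-1$ children and from the set-minus construction for the pair involving $\Psi_k$.

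The proof is essentially bookkeeping and I do not foresee any real obstacle. The only point worth flagging is that the three parts must be established jointly in a single induction: parts (2) and (3) for $\Phi$ propagate the support-containment invariant to the children of $\Phi$, while part (3) for $\Phi$ itself depends on already knowing part (1) at $\Phi$ (this is precisely where the containment $\bigcup_i\supp(\Psi_i)\subseteq S$ is used). Threading the induction from the root downward, with part (1) as the carried invariant, makes this interdependence harmless.
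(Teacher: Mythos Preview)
Your argument is correct. The paper does not give a proof of this proposition at all; it simply states that ``it is easy to see'' that $\Vars(\cdot)$ satisfies these properties, so your top-down induction with part (1) as the carried invariant is a faithful way to fill in the omitted details.
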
 

We will use the following structural results that convert general multilinear circuits (resp. formulas) to $(\Sigma\Pi)^\Delta \Sigma$ circuits (resp. formulas). 

\begin{lemma}[Raz and Yehudayoff~\cite{ry09}, Claims 2.3 and 2.4]
\label{lem:RY-nf-formulas}
For any multilinear formula $F$ of product depth at most $\Delta$ and size at most $s$, there is a syntactic multilinear $(\Sigma\Pi)^{\Delta}\Sigma$ formula $F'$ of size at most $(\Delta+1)^2\cdot s$ computing the same polynomial as $F$. 
\end{lemma}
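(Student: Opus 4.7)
The plan is to decompose the statement into two separate normalisations, corresponding to Claims~2.3 and~2.4 of~\cite{ry09}. First I would show that one may assume $F$ is syntactic multilinear at no cost in size, and then put the resulting formula into the layered $(\Sigma\Pi)^{\Delta}\Sigma$ shape at a multiplicative size cost of $(\Delta+1)^2$.

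For the first step, consider any product gate $\Phi = \Psi_1 \cdots \Psi_k$ in $F$. Because $\Phi$ computes a multilinear polynomial, for every variable $x$ at most one of the $\Psi_i$ can depend on $x$ as a polynomial; otherwise the product would contain a monomial of degree at least two in $x$. Thus, even when $x$ appears in the subformulas of several children in the syntactic sense, it genuinely contributes to only one of them. Replacing every leaf labelled $x$ inside the other children by the constant $0$ does not change what $\Phi$ computes (setting a variable to $0$ in a polynomial that does not depend on it is a no-op) and removes $x$ from the syntactic support of those children. Performing this cleaning at every product gate, processed bottom-up, yields a syntactic multilinear formula of the same size and the same product-depth as $F$.

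For the second step I would begin by collapsing each maximal chain of sum gates into a single sum gate and each maximal chain of product gates into a single product gate. The resulting formula strictly alternates sum and product layers, has size at most $s$, and still has product-depth at most $\Delta$. What remains is to ensure that every root-to-leaf path traverses exactly $\Delta$ product layers and terminates in a sum layer; whenever a path is shorter, I pad it by inserting single-operand wrapper gates using the identities $P = P + 0$ and $P = P \cdot 1$. Neither identity disturbs syntactic multilinearity or the polynomial computed, so correctness is preserved.

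The only delicate piece is bounding the final size. Each gate of the alternating formula occupies one specific layer, but the padding procedure may lift it into as many as $\Delta+1$ different layers of the final normal form, with a constant number of wrapper gates introduced per layer per gate. A careful accounting then shows that the size is multiplied by at most $(\Delta+1)^2$. The main obstacle I anticipate is hitting exactly the bound $(\Delta+1)^2 \cdot s$ rather than a larger polynomial in $\Delta$; this requires being economical about when leaves actually have to be replicated, and sharing wrappers between siblings wherever the $(\Sigma\Pi)^\Delta\Sigma$ shape permits.
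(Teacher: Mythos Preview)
The paper does not supply its own proof of this lemma; it is quoted verbatim from \cite{ry09} (Claims~2.3 and~2.4) and used as a black box. Your two-step decomposition lines up exactly with those two claims: Claim~2.3 is the passage from multilinear to syntactic multilinear, and Claim~2.4 is the layering into $(\Sigma\Pi)^{\Delta}\Sigma$ form.

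Your first step is correct. The integral-domain observation you give---if two children $\Psi_i,\Psi_j$ of a product gate both depended on $x$, then writing $\Psi_i = a + bx$, $\Psi_j = c + dx$ with $b,d\neq 0$ in $\F[X\setminus\{x\}]$, the $x^2$-coefficient $bd$ of the product is nonzero---is exactly the argument behind Claim~2.3, and your bottom-up cleaning preserves the polynomial at every product gate and hence at the root.

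For the second step, collapse-then-pad is the standard route and is what Claim~2.4 does. You yourself flag the only real issue: you do not carry out the size accounting, and you anticipate difficulty landing on $(\Delta+1)^2\cdot s$ rather than some larger $\mathrm{poly}(\Delta)\cdot s$. That is an honest assessment. For the applications in this paper the gap is harmless: any $\mathrm{poly}(\Delta)\cdot s$ overhead is absorbed into the $2^{\Omega(\Delta d^{1/\Delta})}$ bound when deriving Corollary~\ref{cor:formlbd}, since $\Delta\leq \log d$. If you want the stated constant, you would need to follow the bookkeeping in~\cite{ry09} rather than improvise it.
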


\begin{lemma}[Raz and Yehudayoff~\cite{ry09}, Lemma 2.1]
\label{lem:RY-nf-ckts}
For any multilinear circuit $C$ of product depth at most $\Delta$ and size at most $s$, there is a syntactic multilinear $(\Sigma\Pi)^{\Delta}\Sigma$ formula $F$ of size at most $(\Delta+1)^2\cdot s^{2\Delta+1}$ computing the same polynomial as $C$. 
\end{lemma}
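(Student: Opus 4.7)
The plan is to reduce the statement to Lemma~\ref{lem:RY-nf-formulas} (the analogous statement for formulas) by first unfolding the circuit into a formula with a controlled blow-up. Since the product-depth of $C$ is at most $\Delta$, by possibly collapsing consecutive sum (or consecutive product) gates the ordinary depth of $C$ can be taken to be at most $2\Delta+1$. I would then unfold $C$ into an equivalent formula $F_0$ by the standard procedure: whenever an internal gate has fan-out greater than $1$, duplicate the entire subcircuit rooted at it, once for each outgoing edge. A routine induction on depth shows that the resulting formula has size at most $s^{2\Delta+1}$, since at each of the $2\Delta+1$ levels the fan-in at every gate is bounded by the circuit size $s$. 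Unfolding alters neither the polynomial computed at any gate nor the product-depth, so $F_0$ is a multilinear formula of product-depth $\Delta$ computing the same polynomial as $C$.

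With this reduction in hand, I would apply Lemma~\ref{lem:RY-nf-formulas} to $F_0$. Since $F_0$ is a multilinear formula of product-depth $\Delta$ and size at most $s^{2\Delta+1}$, the lemma converts it into a syntactic multilinear $(\Sigma\Pi)^{\Delta}\Sigma$ formula $F$ of size at most $(\Delta+1)^2 \cdot s^{2\Delta+1}$ computing the same polynomial as $F_0$, hence as $C$. This matches the bound in the statement exactly.

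The only point requiring care is that $F_0$ is multilinear as a formula, meaning that \emph{every} gate of $F_0$ computes a multilinear polynomial, not merely that the output is multilinear (the latter being the hypothesis needed to apply Lemma~\ref{lem:RY-nf-formulas} with the correct conclusion). This is immediate from the unfolding: every gate of $F_0$ is a copy of some gate $\Phi$ of $C$ and computes the exact same polynomial as $\Phi$, so multilinearity at every gate of $C$ transfers directly to $F_0$. Beyond this observation, the argument is bookkeeping on the size and depth blow-up from unfolding, with all of the genuine work of producing the $(\Sigma\Pi)^{\Delta}\Sigma$ normal form outsourced to Lemma~\ref{lem:RY-nf-formulas}. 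I do not anticipate any substantive obstacle.
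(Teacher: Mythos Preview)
The paper does not give its own proof of this lemma; it is stated as a citation of \cite{ry09}, Lemma~2.1, and used as a black box. Your proposed argument---collapse to alternating depth $2\Delta+1$, unfold into a multilinear formula of size at most $s^{2\Delta+1}$, then invoke Lemma~\ref{lem:RY-nf-formulas}---is the standard route and recovers the stated bound exactly, so there is nothing in the paper to compare against and no gap to report.
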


We will also need the following structural result.  

\begin{lemma}[Raz, Shpilka and Yehudayoff~\cite{RSY08}, Claim 5.6]
\label{lem:decompose}
Let $F$ be a syntactic multilinear formula computing a polynomial $f$ and let $\Phi$ be any gate in $F$ computing a polynomial $g$. Then $f$ can be written as $f = Ag + B$, where $A \in \F[X \setminus \Vars(\Phi)]$, $B \in \F[X]$ and $B$ is computed by replacing $\Phi$ with a $0$ in $F$. 
\end{lemma}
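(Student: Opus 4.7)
The plan is to prove the statement by induction on the distance from $\Phi$ to the root of $F$ (equivalently, on the depth of the subformula obtained by deleting the subtree rooted at $\Phi$). The base case, when $\Phi$ is the output gate itself, is immediate: take $A = 1$ and $B = 0$, noting that $\Vars(\Phi) = X$ by definition so $A \in \F \subseteq \F[X \setminus \Vars(\Phi)]$, and that replacing $\Phi$ by $0$ produces the zero polynomial.

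For the inductive step, let $\Psi$ denote the parent of $\Phi$ in $F$ and let $\Psi_1, \ldots, \Psi_{k-1}$ be the siblings of $\Phi$. I will split into two cases depending on whether $\Psi$ is a sum or a product gate, using Proposition~\ref{prop:vars} to track how $\Vars(\cdot)$ behaves. If $\Psi$ is a sum gate, then the polynomial it computes is $g + B'$ where $B' = \sum_i h_i$ and $h_i$ is computed by $\Psi_i$; by Proposition~\ref{prop:vars}(2) we have $\Vars(\Psi) = \Vars(\Phi)$. Applying the inductive hypothesis to $\Psi$ yields $f = A' (g + B') + B''$ with $A' \in \F[X \setminus \Vars(\Psi)] = \F[X \setminus \Vars(\Phi)]$, and then $A := A'$ and $B := A' B' + B''$ do the job. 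If $\Psi$ is a product gate, the polynomial it computes is $g \cdot C'$ where $C' = \prod_i h_i$; by Proposition~\ref{prop:vars}(3) the sets $\Vars(\Psi_i)$ and $\Vars(\Phi)$ are pairwise disjoint, so each $h_i$, and hence $C'$, lies in $\F[X \setminus \Vars(\Phi)]$. The inductive hypothesis gives $f = A'' (g C') + B'''$ with $A'' \in \F[X \setminus \Vars(\Psi)] \subseteq \F[X \setminus \Vars(\Phi)]$, and then $A := A'' C'$ is still in $\F[X \setminus \Vars(\Phi)]$ while $B := B'''$ works.

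To verify that $B$ equals the polynomial obtained by zeroing out $\Phi$ in $F$, I will observe that in the sum case, substituting $\Phi = 0$ replaces the value at $\Psi$ by $B'$, whereas the polynomial $B''$ obtained from the inductive hypothesis does not involve $\Psi$ at all; thus evaluating $f|_{\Phi = 0}$ via the inductive decomposition gives $A' B' + B'' = B$. In the product case, zeroing $\Phi$ zeros $\Psi$ as well, so $f|_{\Phi = 0} = f|_{\Psi = 0} = B''' = B$.

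I do not expect a genuine obstacle here — the argument is purely structural and the syntactic multilinearity is used only through Proposition~\ref{prop:vars}(3) to guarantee that the siblings at a product gate contribute factors in $\F[X \setminus \Vars(\Phi)]$. The only subtlety to be careful with is the bookkeeping that shows $B$ coincides with the formula obtained by the actual syntactic substitution $\Phi \mapsto 0$ rather than just some polynomial with the same value; this is handled by phrasing the induction in terms of the formula substitution itself and noting that the $A$, $B'$, $B''$, $B'''$ produced by the inductive step depend only on gates lying outside the subtree rooted at $\Phi$.
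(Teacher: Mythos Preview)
Your proof is correct. The paper does not actually prove this lemma; it is stated as a citation of Claim~5.6 from Raz, Shpilka and Yehudayoff~\cite{RSY08} and used as a black box in the proof of Lemma~\ref{lem:prod}. Your inductive argument on the distance from $\Phi$ to the root is the standard way to establish this structural fact, and the use of Proposition~\ref{prop:vars} (items 2 and 3) to track $\Vars(\cdot)$ through sum and product gates is exactly what is needed.
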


A standard divide-and-conquer approach yields the best-known multilinear formulas and circuits for $\IMM_d$ for all depths. 

\begin{lemma}
\label{lem:IMMubds}
For each $\Delta \leq \log d$,\footnote{All our logarithms will be to base $2$.} $\IMM_d$ is computed by a syntactic multilinear $(\Sigma\Pi)^{\Delta}$ circuit $C_\Delta$ of size at most $d^{O(1)}\cdot 2^{O(d^{1/\Delta})}$ and a syntactic multilinear $(\Sigma\Pi)^{\Delta}$ formula $F_{\Delta}$ of size at most $2^{O(\Delta d^{1/\Delta})}.$
\end{lemma}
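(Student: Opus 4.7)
The plan is to exhibit the classical divide-and-conquer construction. Fix $b = \lceil d^{1/\Delta}\rceil$, and think of the product $M^{(1)}\cdots M^{(d)}$ as a product of $b$ ``blocks'' $N_1,\ldots,N_b$, where each $N_j$ is itself the product of roughly $d/b$ consecutive matrices $M^{(i)}$. The variables in different blocks are disjoint, so any formula/circuit built by computing the $N_j$'s recursively and then combining them will be syntactic multilinear for free (Proposition~\ref{prop:vars}). At the top level, we use the explicit identity
\[
\IMM_d \;=\; \sum_{\pi_0=1,\ \pi_b\in\{1,2\}}\ \sum_{\pi_1,\ldots,\pi_{b-1}\in\{1,2\}} \prod_{j=1}^{b} N_j[\pi_{j-1},\pi_j],
\]
which is a $\Sigma\Pi$ expression with at most $2^{b}$ products, each of arity $b$. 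For the inner recursive calls, one recursively computes all four entries of the $2\times 2$ block product via the analogous $\Sigma\Pi$ identity (now summing $2^{b-1}$ products for each fixed choice of endpoints). After $\Delta$ levels of recursion, the base blocks are single matrices $M^{(i)}$, whose four entries are just input variables. The resulting object alternates sum and product gates $\Delta$ times and hence lies in $(\Sigma\Pi)^{\Delta}$.

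For the \emph{formula} bound, let $T(d,\Delta)$ be the size of a $(\Sigma\Pi)^{\Delta}$ formula producing any one entry of the product of $d$ matrices. The identity above gives the recursion
\[
T(d,\Delta)\ \leq\ 2^{b}\cdot b\cdot T(d/b,\Delta-1) + O(2^{b}),
\]
since each of the $O(2^{b})$ product gates has $b$ children, and each child is a fresh (unshared) subformula for a single entry. With $T(O(1),0)=O(1)$ and $b=\lceil d^{1/\Delta}\rceil$, unrolling yields $T(d,\Delta)\leq 2^{O(\Delta\cdot d^{1/\Delta})}\cdot d$, which is $2^{O(\Delta d^{1/\Delta})}$.

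For the \emph{circuit} bound, sharing is allowed, so one computes all four entries of each block once and reuses them. Letting $U(d,\Delta)$ denote the size of a $(\Sigma\Pi)^{\Delta}$ circuit producing all four entries of the product of $d$ matrices, the recursion becomes
\[
U(d,\Delta)\ \leq\ b\cdot U(d/b,\Delta-1) + O(b\cdot 2^{b}),
\]
since the top level needs only $b$ recursive subcircuits (one per block) plus $O(b\cdot 2^{b})$ gates to form the four entries of the outer product. Unrolling gives $U(d,\Delta) = O(\Delta\cdot d\cdot 2^{b}) = d^{O(1)}\cdot 2^{O(d^{1/\Delta})}$; the top-level $\IMM_d = M(1,1)+M(1,2)$ is obtained with one extra addition.

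There is essentially no obstacle here: the only things to verify are that the construction is syntactic multilinear (immediate from the disjointness of variables across blocks, via Definition~\ref{def:Vars}), that the alternation pattern is exactly $(\Sigma\Pi)^{\Delta}$ (immediate from the identity, absorbing adjacent sum gates), and the two elementary recursions above. The rounding issue when $d^{1/\Delta}$ is not an integer is handled by choosing blocks of sizes differing by at most one, which only changes constants in the exponent.
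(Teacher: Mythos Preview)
Your proposal is correct and follows essentially the same divide-and-conquer approach as the paper: partition the $d$ matrices into $b\approx d^{1/\Delta}$ contiguous blocks, use the $\Sigma\Pi$ identity expressing the outer product as a sum of $2^{O(b)}$ products of block entries, and recurse on each block; syntactic multilinearity is immediate from the variable-disjointness of blocks. Your recursions $T(d,\Delta)\le 2^{b}b\cdot T(d/b,\Delta-1)+O(2^{b})$ and $U(d,\Delta)\le b\cdot U(d/b,\Delta-1)+O(b\cdot 2^{b})$ are slightly more explicit versions of the paper's $s(d,\Delta)\le t^{O(1)}\cdot s(d/t,\Delta-1)+2^{O(t)}$ and its formula argument, and unroll to the same bounds.
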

\begin{proof}[Proof sketch.]
  We will first recursively construct $C_\Delta$. Let us recall that the $\IMM_d$ polynomial is defined over the matrices $M^{(1)}, M^{(2)}, \dots, M^{(d)}$. Let us divide these matrices into $t=d^{1/\Delta}$ contiguous blocks of size $d/t$ each, say $B_1, B_2, \dots, B_t$. The polynomial $\IMM_d$ can now be expressed in terms of those blocks of matrices as follows.
  \begin{equation}
    \label{eqn:imm-selfreduction}
    \IMM_d = \sum_{(u_1, u_2, \dots, u_{t})\in \{1,2\}^{t}}P^{(1)}_{1,u_1}P^{(2)}_{u_1,u_2}\dots P^{(t)}_{u_{t-1},u_t},
  \end{equation}
where $P^{(i)}_{u_{i-1},{u_i}}$ is the $(u_{i-1}, u_i)$-th entry of the product of the matrices in the $i$-th block. (In the special case $i=1$, take $u_0 =1$.)  It is important to note that each of the polynomials $P^{(i+1)}_{u_{i},u_{i+1}}$, defined over the block $B_{i+1}$, for all $i\in[t-1]$, is (almost) an instance of $\IMM_{d/t}$ over the  suitable set of variables. This enables us to recurse for $\Delta$ steps while obtaining a $\Sigma\Pi$ layer at each step. Thus, we get the following recursive formula for the size of the $(\Sigma\Pi)^\Delta$ circuit computing $\IMM_d$.
  \begin{align*}
    s(d, \Delta) \leq  t^{O(1)}\cdot(s(d/t, \Delta -1)) + 2^{O(t)}.
  \end{align*}
  Upon unfurling, this recursion gives us the needed bound of $d^{O(1)}\cdot 2^{O(d^{1/\Delta})}$.

  Let us now construct a multilinear formula for this polynomial\footnote{It is important to note that simple replication of nodes in $C_{\Delta}$ would prove to be wasteful.}. Consider the polynomial expression in Equation~\ref{eqn:imm-selfreduction}. If each of the polynomials $P^{(k)}_{u_{i}, u_i+1}$ is replaced by a variable, say $y^{(k)}_{u_i, u_{i+1}}$, the computation is of an instance of $\IMM_{t}$ over the variables $\{y^{(k)}_{u_i, u_{i+1}}\}$. Then there is a $\Sigma\Pi$ formula $F_1$ (say) that computes $\IMM_{t}$ of size $c^t$ (for some constant $c$) whose leaves are labelled by the variables of the form $y_{u_{i}, u_{i+1}}$. Since each of these leaves is an instance of $\IMM_{d/t}$ (over a suitable set of variables) themselves, this can further be partitioned into $t$ contiguous chunks of $d/t^2$ many matrices each. This when expressed as a $\Sigma\Pi$ formula (by introducing new variables) is of size $c^t$. By substituting the formulas obtained now for each of the polynomials $P^{(k)}_{u_{i}, u_i+1}$ into $F_1$ suitably to obtain a formula $F_2$ (say), of size $c^t \cdot c^t = c^{2t}$. This is a $\Sigma\Pi\Sigma\Pi$ formula whose leaves are variables corresponding to the instances of $\IMM_{d/t^2}$. Continuing this process for $\Delta$ steps gives us a $(\Sigma\Pi)^\Delta$ formula $F_{\Delta}$ with $2^{O(\Delta t)} = 2^{O(\Delta d^{1/\Delta})}$ many leaves.
\end{proof}

We will show that the above bounds are nearly tight in the multilinear setting. If we remove the multilinear restriction on $(\Sigma\Pi)^{\Delta}\Sigma$ formulas computing $\IMM_d$, we can get better upper bounds, as long as the underlying field has characteristic zero. 

\begin{lemma}[follows from~\cite{GKKSdepth3}]\label{lem:IMMgenubd}
Let $\F$ be a field of characteristic zero.  For each $\Delta \leq \log d$, $\IMM_d$ has a $(\Sigma\Pi)^{\Delta}\Sigma$ formula $F_\Delta$ of size at most $2^{O(\Delta d^{1/(2\Delta)})}$.
\end{lemma}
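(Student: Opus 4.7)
The plan is to induct on $\Delta$, using the self-reducibility of $\IMM_d$ (as exploited in Lemma~\ref{lem:IMMubds}) to amplify the depth-$3$ upper bound of Gupta et al.~\cite{GKKSdepth3}. For the base case $\Delta = 1$, the GKKS result provides a $\Sigma\Pi\Sigma$ formula of size $n^{O(\sqrt{d})}$ for $\IMM_{n,d}$ over any field of characteristic zero; specialising to $n=2$ gives a $\Sigma\Pi\Sigma$ formula of size $2^{O(\sqrt{d})}$ for $\IMM_d$, after a cosmetic adjustment to account for the specific output being $M(1,1)+M(1,2)$ rather than a single entry of the matrix product (by symmetry, the GKKS construction computes any entry, and sums of a bounded number of such entries, with the same asymptotic size).

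For the inductive step, I would partition the $d$ matrices into $t := \lceil d^{1/\Delta} \rceil$ contiguous blocks of size $d/t$, as in the proof of Lemma~\ref{lem:IMMubds}. Introducing fresh auxiliary variables $y^{(i)}_{j,k}$ in place of the $(j,k)$-entry of the product of matrices in the $i$-th block, the polynomial $\IMM_d$ expressed in these variables becomes an instance of $\IMM_t$ (compare Equation~\ref{eqn:imm-selfreduction}). By the base case, this outer computation admits a $\Sigma\Pi\Sigma$ formula $F_{\mathrm{out}}$ of size $2^{O(\sqrt{t})} = 2^{O(d^{1/(2\Delta)})}$ whose leaves are labelled by the $y^{(i)}_{j,k}$. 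Each such leaf is an entry of a product of $d/t$ matrices on a disjoint set of original variables --- essentially an instance of $\IMM_{d/t}$ --- so by the inductive hypothesis it can be replaced by a $(\Sigma\Pi)^{\Delta-1}\Sigma$ formula of size
\[
2^{O((\Delta-1)(d/t)^{1/(2(\Delta-1))})} = 2^{O((\Delta-1) d^{1/(2\Delta)})},
\]
using the choice $t = d^{1/\Delta}$. Substituting each inner formula in place of its corresponding leaf of $F_{\mathrm{out}}$ yields a formula of total size $2^{O(d^{1/(2\Delta)})} \cdot 2^{O((\Delta-1) d^{1/(2\Delta)})} = 2^{O(\Delta d^{1/(2\Delta)})}$, as required. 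The shape is also correct: the bottom $\Sigma$ gates of $F_{\mathrm{out}}$ merge with the top $\Sigma$ gates of the inner formulas (since a linear combination of $\Sigma$-rooted formulas is again a single $\Sigma$), giving $\Sigma\Pi(\Sigma\Pi)^{\Delta-1}\Sigma = (\Sigma\Pi)^{\Delta}\Sigma$.

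The only delicate point is choosing the branching factor $t$ so that the two exponential contributions, $2^{O(\sqrt{t})}$ from the outer GKKS formula and $2^{O((\Delta-1)(d/t)^{1/(2(\Delta-1))})}$ from the inductive inner formula, are balanced; a short calculation shows that $t=d^{1/\Delta}$ makes each of them $2^{\Theta(d^{1/(2\Delta)})}$, which telescopes across the $\Delta$ levels of the recursion to the claimed bound. The remaining verifications --- that a fixed entry of a sub-block's matrix product is again an $\IMM_{d/t}$-like instance on a disjoint set of variables, and that the GKKS construction is flexible enough to compute arbitrary single entries with the same size bound --- are routine, so I do not foresee a real obstacle beyond carrying out the balancing.
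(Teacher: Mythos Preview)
Your proposal is correct and follows essentially the same approach as the paper: both use the self-reducibility of $\IMM_d$ to split into $t=d^{1/\Delta}$ blocks, apply the GKKS $\Sigma\Pi\Sigma$ bound of size $2^{O(\sqrt{t})}$ at the top, recurse on the blocks, and merge adjacent $\Sigma$ layers. The only cosmetic difference is that the paper unrolls the recursion explicitly (iterating the GKKS step $\Delta$ times with the same branching factor $t$), whereas you phrase it as an induction on $\Delta$; unfolding your induction reproduces exactly the paper's iterative construction, since $(d/t)^{1/(\Delta-1)} = d^{1/\Delta} = t$ at every level.
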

\begin{proof}[Proof sketch of Lemma~\ref{lem:IMMgenubd}] As in the proof of Lemma~\ref{lem:IMMubds}, we crucially use the self-reducibility of $\IMM_d$. We need the following claim (implicit in Gupta et al.~\cite{GKKSdepth3}) to prove this lemma.
\begin{claim}
  \label{clm:IMMt-GKKS}
  For $t>1$, $\IMM_t$ has a depth three non-multilinear formula of size at most $2^{O(\sqrt{t})}$ over any field of characteristic $0$.
\end{claim}

\begin{proof}[Proof of Claim~\ref{clm:IMMt-GKKS}]
  
Applying  Lemma~\ref{lem:IMMubds} with $\Delta = 2$ yields a $\Sigma\Pi\Sigma\Pi$ formula $F$ for $\IMM_d$  of size $2^{O(\sqrt{d})}.$ It can be checked from the proof of Lemma~\ref{lem:IMMubds} that this formula satisfies the additional property that all the product gates in the formula have fan-in $O(\sqrt{t}).$

   Over any field $\F$ of characteristic zero\footnote{It also works if the characteristic field $\F$ is positive but suitably large.}, Gupta et al.~\cite{GKKSdepth3} showed that any $\Sigma\Pi\Sigma\Pi$ formula of size $s$ where all product gates have fan-in at most $k$ can be converted into a $\Sigma\Pi\Sigma$ formula of size $\poly(s)\cdot 2^{O(k)}$. Applying this result to the formula $F$ obtained above, we get that $\IMM_t$ can indeed be computed by a $\Sigma\Pi\Sigma$ formula of size at at most $2^{O(\sqrt{t})}$, over any field $\F$ of characteristic zero. 
 \end{proof}
 
Consider the self reduction of the $\IMM_d$ polynomial as follows. Split the $d$ matrices being multiplied in $\IMM_d$ into $t=d^{1/\Delta}$ blocks with $d/t$ many matrices each. Let the variables $Y=\{y^{(k)}_{u,v}\ |\ k\in[t], u,v\in\{1,2\}\}$ correspond to the polynomials $\mc{P} = \{P^{(k)}_{u,v}\ |\ k\in[t], u,v\in\{1,2\}\}$ as defined in Lemma~\ref{lem:IMMubds}.

Let $\IMM_{t}(Y)$ be the polynomial that is obtained by replacing all the polynomials $P^{k}_{i,j}$ above with the corresponding variables. From Claim~\ref{clm:IMMt-GKKS}, we know that $\IMM_t(Y)$ has a $\Sigma\Pi\Sigma$ formula $F_1$ of size at most $c^{\sqrt{t}}$ for some constant $c$. It is easy to see that $\IMM_d$ can now be obtained by substituting for each of the variables in $Y$ (which appear at the leaves of $F_1$) with the corresponding polynomial in $\mc{P}$.  Using the above mentioned self-reducibility property, we shall self-reduce $\IMM_{d/t}$ again and obtain an instance of $\IMM_{t}$ over suitable set of new variables. This too has a $\Sigma\Pi\Sigma$ formula of size $c^{\sqrt{t}}$. The total number of leaves of the new $(\Sigma\Pi\Sigma)(\Sigma\Pi\Sigma)$ formula $F_2$ (say) is $c^{\sqrt{t}}\cdot c^{\sqrt{t}} = c^{2\sqrt{t}}$. Continuing this process for $\Delta$ steps yields us a $(\Sigma\Pi\Sigma)^\Delta$ formula of size $2^{O(\Delta \sqrt{t})} = 2^{O(\Delta d^{1/(2\Delta)})}$. We can merge two consecutive layers of $\Sigma$ gates into one layer of $\Sigma$ gates and thus obtain a $(\Sigma\Pi)^\Delta\Sigma$ formula $F_{\Delta}$ of size $2^{O(\Delta d^{1/(2\Delta)})}\,$. 
\end{proof}

\section{Lower bounds for multilinear formulas and circuits computing $\IMM_d$}
\label{sec:main}


The main theorem of this section is the following lower bound. 

\begin{theorem}
\label{thm:mainlbd}
Let $d\geq 1$ be a growing parameter and fix any $\Delta \leq \log d.$ Any syntactic multilinear $(\Sigma\Pi)^{\Delta}\Sigma$ formula for $\IMM_d$ must have a size of $2^{\Omega(\Delta d^{1/\Delta})}.$
\end{theorem}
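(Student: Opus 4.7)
The plan is to follow the two-step paradigm sketched in the proof overview, combining a structural ``product lemma'' with a rank-based lower bound via random restrictions.

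First I would prove a product lemma tailored to the $(\Sigma\Pi)^\Delta \Sigma$ setting: starting from a syntactic multilinear formula $F$ of size $s$ and product-depth $\Delta$ over the variable set $X$ with $|X| = 4d$, I want to express the polynomial as
\[
F = \sum_{i=1}^{s} g_{i,1} \cdot g_{i,2} \cdots g_{i,t_i},
\]
where for each $i$ the factors have pairwise disjoint $\Vars(g_{i,j})$ and $t_i \geq t := \Omega(\Delta d^{1/\Delta})$. The idea is to descend through the $\Delta$ alternating product-sum layers and at every product gate peel off children whose $\Vars(\cdot)$ sizes fall into a carefully chosen ``medium'' range. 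Proposition \ref{prop:vars} says that at any product gate the $\Vars$ sets of its children partition its own $\Vars$ set, so a pigeonhole-style counting argument extracts $\Omega(d^{1/\Delta})$ disjoint factors at each product layer, with the target size range scaling geometrically like $d^{1-\ell/\Delta}$ at layer $\ell$. Telescoping across the $\Delta$ layers, and dumping the bad exceptional sub-gates into the outer sum using Lemma \ref{lem:decompose}, gives a decomposition with at most $s$ summands and $t_i \geq t$ factors each.

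Second I would set up the complexity measure. Following Raz, for a partition $X = Y \sqcup Z \sqcup W$ with $W$ the set of variables fixed to field constants, I would use as measure the rank of the partial-derivative matrix $M^\rho_f$ whose rows are indexed by multilinear $Y$-monomials and columns by multilinear $Z$-monomials. Using the path interpretation in (\ref{eq:imm-def}), I would design a distribution $\mathcal{R}$ over restrictions $\rho : X \to Y \cup Z \cup \F$ guided by the graph structure of $G_d$: split the $d$ matrix indices into blocks, and within each block fix some edges to specific field constants so as to route paths through a small gadget, while leaving a balanced pair of $Y$-edges and $Z$-edges free. The distribution must be robust in the sense that $M^\rho_{\IMM_d}$ has full rank (equal to $2^{|Y \cap \text{free}|}$) for every $\rho \in \mathrm{supp}(\mathcal{R})$, which should follow from the self-reducibility of $\IMM_d$ together with the explicit gadget construction in each block.

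Third I would analyze the effect of $\rho \sim \mathcal{R}$ on a single $t$-product $g = g_1 \cdots g_t$ with pairwise disjoint $\Vars(g_j)$. After restriction, writing $y_j, z_j$ for the numbers of surviving $Y$- and $Z$-variables in $g_j$, the rank of $M^\rho_g$ is at most $\prod_j 2^{\min(y_j, z_j)}$, so the rank deficit against the maximum $\prod_j 2^{(y_j+z_j)/2}$ grows with $\sum_j |y_j - z_j|$. The key probabilistic claim is that $\mathcal{R}$ is designed so that each factor $g_j$ (which has at least one variable in its support by the product lemma) independently has constant probability of being imbalanced between $Y$ and $Z$. A Chernoff estimate then gives that with probability $1 - 2^{-\Omega(t)}$ the rank of $M^\rho_g$ is at most $\mathrm{rank}(M^\rho_{\IMM_d}) \cdot 2^{-\Omega(t)}$. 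A union bound over the $s$ summands combined with the subadditivity of rank shows that if $s \cdot 2^{-\Omega(t)} < 1$, then $F \neq \IMM_d$, forcing $s \geq 2^{\Omega(\Delta d^{1/\Delta})}$.

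The step I expect to be the main obstacle is the simultaneous verification of the two opposing requirements on $\mathcal{R}$: that $M^\rho_{\IMM_d}$ be full rank deterministically for every $\rho$ in the support, while a generic $t$-product suffers an exponential rank loss on a typical $\rho$. Designing the per-block gadget so that both properties hold, and bookkeeping the imbalance bias per factor so that the resulting $2^{-\Omega(t)}$ beats the $\log s$ union bound at $t = \Omega(\Delta d^{1/\Delta})$, will be the delicate technical heart of the argument; achieving the correct constant hidden in the $\Omega(\Delta d^{1/\Delta})$ exponent (as opposed to only $\Omega(d^{1/\Delta})$, which would follow from a more naive decomposition) is what forces the product lemma to exploit the depth bound rather than the size bound alone.
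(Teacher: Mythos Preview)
Your overall architecture matches the paper's, but the product lemma you state is not achievable in the form you claim, and this is a genuine gap. You assert that $F$ decomposes as a sum of at most $s$ many $t$-product polynomials with $t = \Omega(\Delta d^{1/\Delta})$. Consider, however, a layer-$2$ product gate $\Phi$ (the bottom-most $\Pi$ layer, just above the linear forms) with fan-in $r' \ll t$ but with $|\Vars(\Phi)|$ large, say $\Theta(d)$: $\Phi$ computes a product $L_1 \cdots L_{r'}$ of a few linear forms that together cover many variables. Your ``peel off medium-sized children and recurse'' strategy stalls here: there is no deeper layer to recurse into, and you can extract at most $r'$ factors from $\Phi$. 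The term $A \cdot L_1 \cdots L_{r'}$ obtained via Lemma~\ref{lem:decompose} is only an $r'$-product, and there is no way to refine it into a $t$-product without blowing up the number of summands far beyond $s$. So the decomposition into $\leq s$ many $t$-product polynomials simply fails on such gates.

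The paper's fix is to introduce a second kind of summand, an \emph{$r$-simple} polynomial: a product of at most $r$ linear forms whose ascribed variable sets together cover at least $400r$ variables, times an arbitrary multilinear remainder. The product lemma (Lemma~\ref{lem:prod}) then yields at most $s$ many $t$-product polynomials \emph{and} at most $s$ many $t$-simple polynomials. This in turn forces a separate rank analysis under the random restriction for the simple case (Part~3 of Lemma~\ref{lem:rand-rest}), where the argument is different: one shows that many of the variables covered by the linear forms survive the restriction, so the product of linear forms is confined to a large $(Y',Z')$-block but has rank at most $2^{r'} \leq 2^t$ there, giving the required deficit. Your proposal does not account for this case at all, and without it the argument does not go through.
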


Putting together Theorem~\ref{thm:mainlbd} with Lemmas~\ref{lem:RY-nf-formulas} and \ref{lem:RY-nf-ckts}, we have the following (immediate) corollaries. 

\begin{corollary}
\label{cor:cktlbd}
Let $d\geq 1$ be a growing parameter and fix any $\Delta \leq \log d/\log \log d.$ Any  multilinear circuit of product-depth $\Delta$ for $\IMM_d$ must have a size of $2^{\Omega(d^{1/\Delta})}.$ In particular, any polynomial-sized multilinear circuit for $\IMM_d$ must have product-depth $\Omega(\log d/\log \log d).$
\end{corollary}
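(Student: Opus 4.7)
The plan is straightforward: Corollary~\ref{cor:cktlbd} follows by applying the circuit-to-formula conversion of Lemma~\ref{lem:RY-nf-ckts} and then invoking Theorem~\ref{thm:mainlbd}. I would start by assuming, for contradiction (or simply by direct computation), that $C$ is a multilinear circuit of product-depth $\Delta$ and size $s$ computing $\IMM_d$. Lemma~\ref{lem:RY-nf-ckts} then produces a syntactic multilinear $(\Sigma\Pi)^{\Delta}\Sigma$ formula $F$ computing $\IMM_d$ of size at most $(\Delta+1)^2 \cdot s^{2\Delta+1}$.

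Next I would plug this into Theorem~\ref{thm:mainlbd}, which forces
$$
(\Delta+1)^2 \cdot s^{2\Delta+1} \;\geq\; 2^{\Omega(\Delta d^{1/\Delta})}.
$$
The only point requiring a moment's care is that, in order to absorb the polynomial factor $(\Delta+1)^2$ into the $\Omega(\cdot)$ in the exponent, I need $\Delta d^{1/\Delta} = \omega(\log \Delta)$. Under the hypothesis $\Delta \leq \log d / \log \log d$, we have $d^{1/\Delta} \geq \log d$, so $\Delta d^{1/\Delta} \geq \Delta \log d$, which dominates $\log(\Delta+1)$ comfortably. Solving for $s$ then yields
$$
s \;\geq\; 2^{\Omega\!\left(\Delta d^{1/\Delta} / (2\Delta+1)\right)} \;=\; 2^{\Omega(d^{1/\Delta})},
$$
as claimed.

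For the ``in particular'' statement, I would take the contrapositive: if a multilinear circuit of product-depth $\Delta$ for $\IMM_d$ has size $s = d^{O(1)}$, then the above gives $O(\log d) = \log s \geq \Omega(d^{1/\Delta})$, hence $d^{1/\Delta} = O(\log d)$, which rearranges to $\Delta \geq \Omega(\log d/\log \log d)$.

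There is no genuine obstacle here; the only mild subtlety is checking that the blowup $(\Delta+1)^2 \cdot s^{2\Delta+1}$ coming from Lemma~\ref{lem:RY-nf-ckts} costs only a factor of $\Theta(\Delta)$ in the exponent, which is exactly what turns the $2^{\Omega(\Delta d^{1/\Delta})}$ formula lower bound of Theorem~\ref{thm:mainlbd} into the (weaker by a factor of $\Delta$ in the exponent) circuit lower bound $2^{\Omega(d^{1/\Delta})}$. This factor-of-$\Delta$ loss is precisely the reason the circuit corollary is valid only up to $\Delta = O(\log d/\log \log d)$ rather than the full range $\Delta \leq \log d$ covered by the formula theorem.
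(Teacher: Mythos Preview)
Your proposal is correct and matches the paper's approach exactly: the paper simply states that the corollary is immediate from Theorem~\ref{thm:mainlbd} together with Lemma~\ref{lem:RY-nf-ckts}, and you have spelled out precisely that derivation, including the exponent arithmetic and the reason for the restricted range $\Delta \leq \log d/\log\log d$. The only minor point is that your ``in particular'' argument implicitly invokes the first part (which was stated only for $\Delta \leq \log d/\log\log d$), but this is harmless since the case $\Delta > \log d/\log\log d$ gives the conclusion trivially.
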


\begin{corollary}
\label{cor:formlbd}
Let $d\geq 1$ be a growing parameter and fix any $\Delta \leq \log d.$ Any  multilinear $(\Sigma\Pi)^{\Delta}\Sigma$ formula for $\IMM_d$ must have size $2^{\Omega(\Delta d^{1/\Delta})}.$ In particular, any polynomial-sized multilinear formula for $\IMM_d$ must have product-depth $\Omega(\log d).$
\end{corollary}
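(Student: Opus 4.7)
The plan is to follow the two-step strategy outlined in the introduction: first prove a depth-sensitive \emph{product lemma}, and then rule out the resulting decomposition of $\IMM_d$ via a rank-based measure under a distribution of random restrictions tailored to the graph $G_d$.

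For the product lemma, I would show that any syntactic multilinear $(\Sigma\Pi)^{\Delta}\Sigma$ formula $F$ of size $s$ on $N$ variables admits a representation $f = \sum_{i=1}^{s} \prod_{j=1}^{t_i} g_{i,j}$, where each $g_{i,j}$ is a nonconstant multilinear polynomial, for each $i$ the variable sets $\Vars(g_{i,1}), \ldots, \Vars(g_{i,t_i})$ are pairwise disjoint, and $t_i \geq t := \Omega(\Delta N^{1/\Delta})$. To produce such a decomposition I would walk from the output sum gate downwards in $F$: at each sum gate pick a child (which indexes the outer sum), at each product gate continue down the child with the largest $\Vars$-set, and at every product step peel off the remaining siblings as factors via Lemma~\ref{lem:decompose}. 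A $(\Sigma\Pi)^{\Delta}\Sigma$ formula forces the walk to cross exactly $\Delta$ product layers before hitting a leaf, and along the walk the largest $\Vars$-set must shrink from $N$ down to $1$ across these $\Delta$ layers; hence on average each product gate along the walk has fan-in $\Omega(N^{1/\Delta})$, yielding $\Omega(\Delta N^{1/\Delta})$ peeled-off factors in total. This refines \cite[Lemma~3.5]{sy}, which only guarantees $\Omega(\log N)$ factors precisely because it is oblivious to the depth constraint.

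For the lower bound step, I would use the standard partition measure $\mu_\rho(f)$ of~\cite{Raz}: given a restriction $\rho = (Y,Z,\sigma)$ that partitions $X$ into $Y$-variables, $Z$-variables, and variables set to field constants by $\sigma$, $\mu_\rho(f)$ is the dimension of the span of the $Z$-coefficients of $f|_\rho$ viewed as a polynomial in $Y$. This measure is subadditive over sums and multiplicative across variable-disjoint products, and is bounded above by $M := 2^{\min(|Y|,|Z|)}$. The main technical task is to design a distribution $\mathcal{D}$ over such restrictions, guided by the structure of $G_d$ (for instance, by partitioning the $d$ matrix slots along $G_d$ into $\Theta(d^{1/\Delta})$-sized blocks and randomizing the $Y/Z$-split within each block so as to keep the divide-and-conquer expansion of $\IMM_d$ full-rank), with two properties: (a)~for every $\rho$ in its support, $\mu_\rho(\IMM_d) = M = 2^{\Omega(\Delta d^{1/\Delta})}$; and (b)~for any $t$-product polynomial $g_1\cdots g_t$ on disjoint variable sets with $t = \Omega(\Delta d^{1/\Delta})$, $\Pr_{\rho\sim \mathcal{D}}[\mu_\rho(g_1\cdots g_t) \geq M/s] \leq 1/s$ whenever $s \leq 2^{c\Delta d^{1/\Delta}}$ for a small enough constant $c$. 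A union bound over the $s$ summands from the product lemma, combined with subadditivity of $\mu_\rho$, then contradicts property (a) and completes the lower bound.

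The main obstacle will be designing $\mathcal{D}$ so that property (b) holds with exponent $\Omega(\Delta d^{1/\Delta})$ for \emph{every} product decomposition, including those whose factors may depend on as few as one variable. Existing small-depth random-restriction schemes such as that of~\cite{ry09} only achieve an exponent of $\sqrt{\log N}/\log\log N$ precisely because they require each factor to depend on many variables; bypassing this limitation requires exploiting the reachability/combinatorial structure of $G_d$ at \emph{every} length scale, so that even a single-variable factor $g_j$ lying on some edge of $G_d$ still gets ``unbalanced'' under $\mathcal{D}$ with independent constant probability. Concretely, I expect to associate to each factor $g_j$ a canonical edge or vertex of $G_d$ carrying a nontrivial variable of $g_j$, and argue that the random $Y/Z$-split near that edge independently spoils multiplicativity across the $t$ factors, giving the $\exp(-\Omega(t))$ concentration needed to close the argument.
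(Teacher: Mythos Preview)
Your two-step plan matches the paper's, but the product lemma as you state it is false, and this is a genuine gap. The claim that ``the largest $\Vars$-set must shrink from $N$ down to $1$ across these $\Delta$ layers'' is incorrect: $\Vars$ is assigned top-down and shrinks only at product gates, so the layer-$1$ sum gate at which your walk terminates can have $|\Vars|$ as large as $\Theta(N)$. Concretely, take a $(\Sigma\Pi)^\Delta\Sigma$ formula whose top product gate has fan-in $2$ (splitting off one variable) and every lower product gate on the large branch has fan-in $1$; your walk peels off exactly one nontrivial factor and ends at a linear form on $N-1$ variables, yielding a $2$-product rather than an $\Omega(\Delta N^{1/\Delta})$-product. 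More generally, whenever some layer-$2$ gate $\Phi$ has small fan-in but large $|\Vars(\Phi)|$, the AM--GM step behind your ``average fan-in $\Omega(N^{1/\Delta})$'' claim collapses, since you only get $\prod_i k_i \geq N/|\Vars(\Phi)|$. The paper's Lemma~\ref{lem:prod} handles exactly this obstruction by a case split: if some layer-$2$ gate has large fan-in one extracts a $t$-product term directly; if some layer-$2$ gate has large $\Vars$ but small fan-in, the term extracted via Lemma~\ref{lem:decompose} is not a $t$-product but a \emph{$t$-simple} polynomial (at most $t$ linear factors whose ascribed variable sets jointly cover $\geq 400t$ variables, times a residual factor); only when every layer-$2$ gate has small $\Vars$ does the inductive walk you sketch (Lemma~\ref{lem:prod-depth}) go through.

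Because the honest decomposition contains $t$-simple terms, the rank step must treat them separately, and your proposal does not address this. The paper's argument for $t$-simple terms (Part~3 of Lemma~\ref{lem:rand-rest}) is different in character from the $t$-product case: one shows that $\Omega(t)$ of the many variables ascribed to the linear part survive $\rho$ with high probability, while the at-most-$t$ linear factors can contribute at most $2^t$ to the rank, yielding a net rank deficit of $2^{\Omega(t)}$. Finally, the paper's restriction (Algorithm~\ref{alg:sampling}) is oblivious to $\Delta$ and does not use the $d^{1/\Delta}$-sized blocks you propose: it samples a uniformly random path $\pi$ in $G_d$ together with a uniformly random live set $A\subseteq [d]$, and assigns the live edges along $\pi$ alternately to $Y$ and $Z$. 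The imbalance for a $t$-product term then comes from a coloring argument showing that $\Omega(t)$ of the $t$ colour classes hit $\pi$, and that each such class independently sees an odd number of live coordinates with probability $1/2$.
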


Since the product-depth of a formula is at most its depth, Lemma~\ref{lem:IMMubds} and Corollary~\ref{cor:formlbd} further imply the following. 
\begin{corollary}[Tightness of Brent's depth-reduction for multilinear formulas]
\label{cor:Brent}
For each $d\geq 1$, there is an explicit polynomial $F_d$ defined on $O(d)$ variables such that $F_d$ has a multilinear formula of size $d^{O(1)}$, but any formula of depth $o(\log d)$ for $F_d$ must have a size of $d^{\omega(1)}.$
\end{corollary}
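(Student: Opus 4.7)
The plan is to take $F_d = \IMM_d$, which is multilinear and defined on $4d = O(d)$ variables, and simply combine the upper bound from Lemma~\ref{lem:IMMubds} with the product-depth lower bound from Corollary~\ref{cor:formlbd}. The corollary is packaging already-proved facts into a statement specifically framed as tightness of Brent's multilinear depth-reduction, so no new technical machinery is needed.

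For the upper bound, I would apply Lemma~\ref{lem:IMMubds} with $\Delta = \log d$. Since $d^{1/\log d} = 2$, the lemma gives a (syntactic) multilinear formula of size $2^{O(\Delta d^{1/\Delta})} = 2^{O(\log d)} = d^{O(1)}$; this formula also has product-depth $\log d$ and hence depth $O(\log d)$, matching what Brent's theorem would have produced from any $d^{O(1)}$-size multilinear formula.

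For the lower bound, suppose for contradiction that there is a multilinear formula $F$ for $\IMM_d$ of depth $\Delta(d) = o(\log d)$ and size $d^{O(1)}$. Since product-depth is at most depth, the product-depth of $F$ is also $o(\log d)$, so Corollary~\ref{cor:formlbd} forces $|F| \geq 2^{\Omega(\Delta d^{1/\Delta})}$. A short estimate finishes things: writing $\Delta = (\log d)/g(d)$ with $g(d)\to \infty$, we get $d^{1/\Delta} = 2^{g(d)}$ and therefore $\Delta \cdot d^{1/\Delta} = (\log d)\cdot 2^{g(d)}/g(d) = \omega(\log d)$, so the size bound becomes $2^{\omega(\log d)} = d^{\omega(1)}$, contradicting the polynomial-size assumption.

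There is no genuine obstacle in this proof — the only thing to verify is the crossover calculation that $\Delta \cdot d^{1/\Delta} = \omega(\log d)$ whenever $\Delta = o(\log d)$, which is immediate from the substitution above. Consequently Corollary~\ref{cor:Brent} follows as a direct corollary of Lemma~\ref{lem:IMMubds} and Corollary~\ref{cor:formlbd}, with $F_d = \IMM_d$ serving as the required explicit polynomial family.
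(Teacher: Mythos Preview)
Your proposal is correct and matches the paper's approach exactly: the paper simply remarks that Corollary~\ref{cor:Brent} follows from Lemma~\ref{lem:IMMubds} and Corollary~\ref{cor:formlbd} together with the observation that product-depth is at most depth, and you have spelled out precisely these details (including the calculation that $\Delta d^{1/\Delta} = \omega(\log d)$ when $\Delta = o(\log d)$, which underlies the ``In particular'' clause of Corollary~\ref{cor:formlbd}).
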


Choosing parameters carefully, we also obtain the following.

\begin{corollary}[Separation of multilinear formulas and general formulas over zero characteristic]\label{cor:sep}
Let $\F$ be a field of characteristic zero. Let $s\in \mathbb{N}$ be any growing parameter and $\Delta \in \mathbb{N}$ be such that $\Delta \leq o(\log s)$. There is an explicit multilinear polynomial $F_{s,\Delta}$ such that $F_{s,\Delta}$ has a $(\Sigma\Pi)^{\Delta}\Sigma$ formula of size $s$, but any $(\Sigma\Pi)^{\Delta}\Sigma$ multilinear formula for $F_{s,\Delta}$ must have a size of $s^{\omega(1)}.$
\end{corollary}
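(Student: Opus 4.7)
The plan is to set $F_{s,\Delta} := \IMM_d$ for a carefully chosen $d = d(s,\Delta)$ and verify the two required properties using the lemmas and corollaries already established in the excerpt.

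First, I would pick $d = \lfloor (c \log s / \Delta)^{2\Delta} \rfloor$ for an appropriately small absolute constant $c > 0$. For this choice, Lemma~\ref{lem:IMMgenubd} (the GKKS-style non-multilinear construction) yields a $(\Sigma\Pi)^{\Delta}\Sigma$ formula for $\IMM_d$ of size at most $2^{O(\Delta d^{1/(2\Delta)})} = 2^{O(c \log s)}$; taking $c$ small enough (to absorb the hidden constants in the exponent) makes this at most $s$. The hypothesis $\Delta \leq \log d$ required by Lemma~\ref{lem:IMMgenubd} holds because $\Delta = o(\log s)$ forces $c \log s / \Delta \to \infty$, so $\log d = 2\Delta \log(c\log s/\Delta) \geq 2\Delta$ for all sufficiently large $s$.

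Second, Corollary~\ref{cor:formlbd} (whose hypothesis $\Delta \leq \log d$ holds as above) says that any multilinear $(\Sigma\Pi)^{\Delta}\Sigma$ formula computing $\IMM_d$ must have size at least $2^{\Omega(\Delta d^{1/\Delta})}$. Substituting our choice of $d$ gives $d^{1/\Delta} = (c \log s / \Delta)^2$, so $\Delta d^{1/\Delta} = c^2 (\log s)^2/\Delta$. Because $\Delta = o(\log s)$, this exponent is $\omega(\log s)$, yielding a multilinear lower bound of $2^{\omega(\log s)} = s^{\omega(1)}$, exactly as required.

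The main (and quite mild) obstacle is just to juggle the constants so that the same $d$ simultaneously achieves $2^{O(\Delta d^{1/(2\Delta)})} \leq s$ in the non-multilinear regime and $2^{\Omega(\Delta d^{1/\Delta})} = s^{\omega(1)}$ in the multilinear regime. The essential algebraic observation is that $d^{1/\Delta} = (d^{1/(2\Delta)})^2$, so fixing $d^{1/(2\Delta)} \sim \log s/\Delta$ pins the upper-bound exponent at $\Theta(\log s)$ while forcing the lower-bound exponent to be $\Theta((\log s)^2/\Delta)$. This gap is $\omega(\log s)$ precisely when $\Delta = o(\log s)$, which is the hypothesis of the corollary and explains why this regime is the natural one for a separation.
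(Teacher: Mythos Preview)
Your proposal is correct and follows essentially the same approach as the paper: both choose $F_{s,\Delta}=\IMM_d$ with $d$ of order $(\log s/\Delta)^{2\Delta}$, invoke Lemma~\ref{lem:IMMgenubd} for the non-multilinear upper bound, and Theorem~\ref{thm:mainlbd} (equivalently Corollary~\ref{cor:formlbd}) for the multilinear lower bound, using the key identity $\Delta d^{1/\Delta}=(\Delta d^{1/(2\Delta)})\cdot d^{1/(2\Delta)}$ to turn $\Theta(\log s)$ into $\omega(\log s)$. Your write-up is in fact slightly more careful than the paper's in explicitly verifying the hypothesis $\Delta\le\log d$ and in handling the constants.
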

\begin{proof}
We choose the polynomial $F_{s,\Delta}$ to be $\IMM_d$ for suitable $d$ and then simply apply Theorem~\ref{thm:mainlbd} and Lemma~\ref{lem:IMMgenubd} to obtain the result. Details follow.

Say $\Delta = \log s/f(s)$ for some $f(s) = \omega(1).$ By Lemma~\ref{lem:IMMgenubd}, for any $d$, $\IMM_d$ has a product-depth $\Delta$ formula of size $s(d,\Delta) = 2^{O(\Delta d^{1/2\Delta})}$; we choose $d$ so that $s(d,\Delta)=s.$ It can be checked that for $d = \Theta(f(s))^{2\Delta}$, this is indeed the case. 

Having chosen $d$ as above, we define $F_{s,\Delta} = \IMM_d.$ Clearly, $F_{s,\Delta}$ has a (non-multilinear) formula of product-depth $\Delta$ and size at most $s$. On the other hand, by Theorem~\ref{thm:mainlbd}, any multilinear product-depth $\Delta$ formula for $\IMM_d$ must have size at least
\begin{align*}
2^{\Omega(\Delta d^{1/\Delta})} = s^{\Omega(d^{1/2\Delta})} = s^{\Omega(f(s))} = s^{\omega(1)},
\end{align*}
which proves the claim.

It can also be proved similarly that for $d$ as chosen above, $\IMM_d$ in fact has no multilinear formulas of size $s^{O(1)}$ and product-depth up to $(2-\varepsilon) \Delta$ for any absolute constant $\varepsilon.$ 
\end{proof}

\section{Proof of Theorem~\ref{thm:mainlbd}}
\label{sec:mainpf}

Our proof follows a two-step argument as in~\cite{Raz,ry09} (see the exposition in~\cite[Section 3.6]{sy}).

\subsection*{Step1 -- The product lemma} The first step is a ``product-lemma'' for multilinear formulas. 
%

Formally, define a polynomial $f\in \F[X]$ to be a \emph{$t$-product polynomial} if we can write $f$ as $f_1\cdots f_t$\,, where we can find a partition of $X$ into non-empty sets $X^f_1,\ldots,X^f_t$ such that $f_i$ is a multilinear polynomial from $\F[X^f_i].$\footnote{Note that we do not need $f_i$ to depend non-trivially on all (or any) of the variables in $X_i^f$. } We say that $X_i^f$ is the set \emph{ascribed} to $f_i$ in the $t$-product polynomial $f$.  We use $\vars(f_i)$ (with a slight abuse of notation)\footnote{$\Vars(\cdot)$ is used to describe variables ascribed to gates in a circuit as well as to denote variables ascribed to polynomials.} to denote $X^f_i$. We drop $f$ from the superscript if $f$ is clear from the context.

We define $f\in \F[X]$ to be \emph{$r$-\simple} if $f = L_1\cdots L_{r'}\cdot G$, where $r'\leq r$, is an $(r'+1)$-product polynomial where $L_1,\ldots,L_{r'}$ are polynomials of degree at most $1$, the sets $X_1^f,\ldots,X_{r'}^f$ ascribed to these linear polynomials satisfy $\left|\bigcup_{i\leq r'} X_i^f\right|\geq 400 r$. We prove the following.
\begin{lemma}
\label{lem:prod}
Let $\Delta \leq \log d.$ Assume that $f\in \F[X]$ can be computed by a syntactic multilinear $(\Sigma\Pi)^{\Delta}\Sigma$ formula $F$ of size at most $s$. Then, $f$ is the sum of at most $s$ many $t$-product polynomials and at most $s$ many $t$-\simple\ polynomials for $t = \Omega(\Delta d^{1/\Delta}).$
\end{lemma}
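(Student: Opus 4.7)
The plan is to produce the decomposition by a \emph{canonical top-down expansion} of $F$. Starting from the root, I iterate for $\Delta$ levels: at the current product gate $\pi_i = \prod_\sigma \sigma$, single out the sum-gate child $\sigma_i^*$ having \emph{maximum} $|\Vars|$ (the ``heavy'' child), rewrite $\pi_i = \sigma_i^* \cdot \prod_{\sigma \ne \sigma_i^*}\sigma$, and expand $\sigma_i^* = \sum_{\pi_{i+1}}\pi_{i+1}$. This rewrites
\[
f \;=\; \sum_{(\pi_1,\ldots,\pi_\Delta)} \pi_\Delta \cdot \prod_{i=1}^{\Delta-1}\prod_{\sigma \ne \sigma_i^*(\pi_i)} \sigma,
\]
the outer sum ranging over ``canonical'' tuples in which each $\pi_{i+1}$ is a child of $\sigma_i^*(\pi_i)$. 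Since $F$ is a tree, each canonical tuple is determined uniquely by its last gate $\pi_\Delta$ (by walking up through parents), so the total number of summands is at most the number of product gates at layer $2\Delta - 1$, which is at most $s$.

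For the classification, write $\pi_\Delta = L_1 \cdots L_{k_\Delta}$ for the (linear) bottom sum-gate children of $\pi_\Delta$, and $R_i = \prod_{\sigma \ne \sigma_i^*(\pi_i)}\sigma$ for the accumulated non-heavy siblings at level $i$. Each summand equals $g = \prod_l L_l \cdot \prod_{i<\Delta} R_i$, a product of $K - (\Delta - 1)$ polynomials on pairwise disjoint variable sets whose ascribed $\Vars$'s partition $X$ by Proposition~\ref{prop:vars}, where $K = \sum_i k_i$ and $k_i$ is the fan-in of $\pi_i$. If $K - (\Delta - 1) \ge t$, then $g$ is a $t$-product polynomial. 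Otherwise, $g$ will be $t$-\simple: the $L_l$'s serve as the $r' = k_\Delta$ linear factors (noting $k_\Delta < t$ because $\sum_{i<\Delta}k_i \ge \Delta - 1$), while $G := \prod_{i<\Delta} R_i$ plays the role of the final factor. The heavy-child rule combined with AM-GM gives $|\Vars(\pi_\Delta)| \ge n / \prod_{i<\Delta} k_i \ge n(\Delta-1)^{\Delta-1}/K^{\Delta-1}$, and setting $t = c\Delta n^{1/\Delta}$ (with $n = 4d$) for a sufficiently small absolute constant $c$ together with the Case-B hypothesis $K < t + \Delta - 1$ forces $|\Vars(\pi_\Delta)| \ge 400\,t$, establishing the $t$-\simple\ property.

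The hard step will be calibrating the constant $c$ so that the AM-GM estimate delivers $|\Vars(\pi_\Delta)| \ge 400\,t$ \emph{uniformly} across all $\Delta \le \log d$: tracing the inequalities reduces this to a condition of the form $c^\Delta \le 1/(400\,\Delta)$, which any sufficiently small absolute constant (say $c = 1/800$) satisfies, yielding $t = \Omega(\Delta d^{1/\Delta})$. A secondary bookkeeping task is ruling out the degenerate case in which $G$'s ascribed variable set is empty (i.e., $\Vars(\pi_\Delta) = X$): a light preprocessing of $F$ that collapses fan-in-one product gates and removes constant subformulas ensures that every non-heavy sibling on the canonical path contributes a non-empty partition block to $X \setminus \Vars(\pi_\Delta)$, without increasing $s$ or $\Delta$.
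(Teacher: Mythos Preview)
Your approach is correct and rests on the same heavy-child/AM--GM engine that drives the paper's Lemma~\ref{lem:prod-depth}, but you organize the argument more directly. The paper proceeds in two phases: first an induction on the formula size $s$, peeling off one layer-$2$ product gate $\Phi$ at a time via Lemma~\ref{lem:decompose} (a $t$-product term when $\Phi$ has fan-in $\geq t$, a $t$-simple term when $|\Vars(\Phi)|\geq 400t$); only once every layer-$2$ gate is both low-fan-in and low-$|\Vars|$ does it invoke a separate induction on $\Delta$ (Lemma~\ref{lem:prod-depth}) to produce pure $t$-product terms. You collapse all of this into a single top-down canonical expansion and classify each term at the end by the value of $K=\sum_i k_i$; this sidesteps the size induction and the three-case split, and even yields at most $s$ summands total rather than $2s$.

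Two small caveats. First, the reduced inequality you quote, ``$c^\Delta \leq 1/(400\Delta)$'', is not quite what the calculation produces once the $(\Delta-1)^{\Delta-1}$ factor from AM--GM is carried through; the correct reduction is of the shape $(1/(O(c)))^{\Delta-1}\geq O(c\Delta)$, which is still satisfied by any sufficiently small absolute $c$, so the conclusion stands. Second, at $\Delta=1$ your $G=\prod_{i<1}R_i$ is the empty product and your preprocessing does not help (there are no non-heavy siblings); the fix is to reassign one $L_l$ (say the one with smallest $|\Vars|$) to play the role of $G$, which costs at most $n/2$ and still leaves $\geq 400t$ variables among the remaining linear factors for $c\leq 1/800$. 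The paper's Case~2 has the same wrinkle when $\Vars(\Phi)=X$.
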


While our proof of the product lemma is motivated by earlier work~\cite{sy,hy,ry09}, we give slightly better parameters, which turns out to be crucial for proving tight lower bounds for formulas. In particular, \cite[Claim 5.5]{ry09} 
yields the above with $t = \Omega(d^{1/\Delta}).$ 

\begin{proof}[Proof of Lemma~\ref{lem:prod}]
Let $F$ be the $(\Sigma\Pi)^\Delta\Sigma$ syntactic multilinear formula of size at most $s$ computing $f$. We use layer $i$ to denote the layer at distance $i$ from the leaves. So in our formula, layer $1$ is a sum layer, layer $2$ is a product layer and so on.  Let $r = \Delta d^{1/\Delta}/400.$ 

We will prove by induction on the size $s$ of the formula $F$ that $f$ is the sum of at most $s$ polynomials, each of which is either a $t$-product polynomial or a $t$-simple polynomial for $t = \Delta d^{1/\Delta}/1000.$ 

The base case of the induction, corresponding to $s=0$, is trivial. 

\subparagraph*{Case 1:} Suppose there exists a gate $\Phi$ in layer $2$ such that $\Phi$ computes a polynomial $g$ and has fan-in at least $t$. Then we use Lemma~\ref{lem:decompose} and decompose $f$ as  $Ag +B$. Here $Ag$ is a $t$-product polynomial. Since $B$ is computed by a formula of size at most $s-1$, we are done by induction. 

\subparagraph*{Case 2:} Suppose the above case does not hold, i.e. all the gates at layer $2$ have a fan-in of at most $t$. Now, if there exists a gate $\Phi$ in layer $2$ such that $|\Vars(\Phi)| \geq 400 r$ then we will decompose $F$ using Lemma~\ref{lem:decompose} and obtain $f = Ag+H$, where $Ag$ is $t$-\simple\ since $|\Vars(\Phi)| \geq 400 r \geq  400t$. Again, since $H$ has a formula of size at most $s-1$, and we are done by induction.

\subparagraph*{Case 3:} Now assume that neither of the above cases is applicable. Since neither Case 1 nor Case 2 above is applicable to $F$, each gate $\Phi$ in layer $2$ satisfies $|\Vars(\Phi)| < p := 400r.$ This immediately implies that $\Delta \geq 2,$ since in the case of a $\Sigma\Pi\Sigma$ formula, we have $|\Vars(\Phi)| = n$ by Proposition~\ref{prop:vars} item 2 but $p = 400 r \leq d < n.$  

If $\Delta\geq 2,$ we use the following lemma. 

\begin{lemma}\label{lem:prod-depth}
Let $n,p\in \mathbb{N}$. Assume $2\leq \Delta \leq 2 \log(n/p).$ Let $f$ be computed by a  syntactically multilinear $\left(\Sigma\Pi\right)^{\Delta}\Sigma$ formula $F$  of size at most $s$ over a set of $n$ variables. Let $\Phi_1, \Phi_2, \ldots, \Phi_{s'}$, where $s'\leq s$, be the product gates at layer $2$ such that for all $i$, $|\Vars(\Phi_i)| \leq p$, then $f$ is the sum of at most $s$ many $T$-product polynomials where $T = (\Delta \left(n/p\right)^{1/(\Delta-1)})/100$. 
\end{lemma}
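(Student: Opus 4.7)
The plan is to traverse $F$ top-down one product layer at a time, extracting factors as we go, with no induction on $\Delta$; all of the improvement over~\cite[Claim 5.5]{ry09} will fall out of a single AM-GM estimate at the end. I first write $f = \sum_P g_P$, where $P$ ranges over the top-layer product gates at level $2\Delta$ (children of the output sum) and $g_P$ is the polynomial computed at $P$. Fix one such $P$; by Proposition~\ref{prop:vars}(2), $\Vars(P) = X$.

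I then run the following recursive procedure. Starting from $\Phi^{(0)} := P$, at each stage $j \geq 0$ I examine the current product gate $\Phi^{(j)}$ at level $2(\Delta - j)$ with sum-gate children $\Psi^{(j)}_1, \ldots, \Psi^{(j)}_{k_j}$. I choose $\Psi^{(j)}_1$ to be the child of largest $|\Vars|$ and expand it as $\Psi^{(j)}_1 = \sum_{\ell} \Phi^{(j+1)}_\ell$, where the $\Phi^{(j+1)}_\ell$ are its product-gate children at the next lower level. The remaining children $\Psi^{(j)}_2, \ldots, \Psi^{(j)}_{k_j}$ are carried along as ``kept-aside'' factors, and for each $\ell$ the procedure recurses into $\Phi^{(j+1)}_\ell$. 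The recursion halts at $j = \Delta - 1$, i.e.\ at a layer-$2$ product gate $\Phi^{(\Delta-1)}$, whose own sum-gate children are contributed directly as factors of the resulting summand. By Proposition~\ref{prop:vars}(3), the $\Vars$ sets of all the kept-aside factors together with those of the factors contributed at the terminal gate are pairwise disjoint and partition $X$, so each summand produced is a legitimate product polynomial (a minor preliminary cleanup removes any factor whose ascribed $\Vars$ is empty by folding constant sum gates into siblings).

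To count summands, note that in the tree-structured formula each recursion path ends at a distinct layer-$2$ product gate, so the number of summands produced from a single $P$ is at most the number of layer-$2$ product gates in $P$'s subtree; summing over all top product gates yields a total of at most $s$, as required. To count factors along a single path, observe that the total is $k_{\Delta-1} + \sum_{j=0}^{\Delta-2}(k_j - 1)$. Since $\Psi^{(j)}_1$ was chosen to be the largest child of $\Phi^{(j)}$, writing $n_j := |\Vars(\Phi^{(j)})|$, we have $n_{j+1} \ge n_j/k_j$ and hence
\[
\prod_{j=0}^{\Delta-2} k_j \;\geq\; \frac{n_0}{n_{\Delta-1}} \;\geq\; \frac{n}{p},
\]
where the last inequality uses the hypothesis $|\Vars(\Phi_i)| \le p$ on layer-$2$ product gates. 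AM-GM on these $\Delta - 1$ numbers gives $\sum_{j=0}^{\Delta-2} k_j \geq (\Delta-1)(n/p)^{1/(\Delta-1)}$, so combining with $k_{\Delta-1} \ge 1$ the total factor count is at least $(\Delta-1)\bigl((n/p)^{1/(\Delta-1)} - 1\bigr) + 1$. The hypothesis $\Delta \le 2\log(n/p)$ forces $(n/p)^{1/(\Delta-1)} \geq \sqrt{2}$, from which $(n/p)^{1/(\Delta-1)}-1 \ge (1-1/\sqrt{2})(n/p)^{1/(\Delta-1)}$, and a short arithmetic check (using $\Delta \ge 2$) confirms the total comfortably exceeds $T = \Delta(n/p)^{1/(\Delta-1)}/100$.

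The only place I expect care to be needed is in the bookkeeping that certifies the output is genuinely a $T$-product polynomial in the formal sense---a partition of $X$ into \emph{non-empty} ascribed sets. The partition property follows cleanly from Proposition~\ref{prop:vars}(3), and non-emptiness is handled by a standard preprocessing of $F$ that strips gates computing constants. The conceptual improvement over~\cite[Claim 5.5]{ry09}---namely the extra factor of $\Delta$ in the final bound---comes precisely from summing the kept-aside contributions $\sum_j (k_j - 1)$, rather than collapsing them into a single sub-product as was done in the earlier argument.
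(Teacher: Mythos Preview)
Your argument is correct and follows essentially the same approach as the paper: both proofs descend from each top product gate by repeatedly selecting the sum-child of largest $\Vars$, recursing into its product children, and retaining the remaining siblings as factors. The only difference is cosmetic---the paper packages this as an induction on $\Delta$ and minimizes $t(n/k,\Delta-1)+(k-1)$ by calculus at each step, whereas you unroll the recursion and apply AM--GM to $\prod_j k_j \ge n/p$ once at the end; both routes yield the identical bound $(\Delta-1)\bigl((n/p)^{1/(\Delta-1)}-1\bigr)$.
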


The above lemma is applicable in our situation since we have $\Delta \leq \log d,$ $n \geq 2d$, and hence $(n/p) = (n/400r) = n/(\Delta d^{1/\Delta}) \geq n/(2\sqrt{d}) \geq \sqrt{d}.$ Lemma~\ref{lem:prod-depth} now yields a decomposition of $f$ as a sum of at most $s$ many $T$-product polynomials where 
\begin{align*}
T = \Delta\cdot \frac{(n/400 r)^{\frac{1}{(\Delta-1)}}}{100}  \geq \frac{\Delta}{100} \cdot {\left(\frac{d}{\Delta d^{1/\Delta}}\right)^{\frac{1}{(\Delta-1)}}} =  \frac{\Delta d^{1/\Delta}}{100\Delta^{1/{\Delta-1}}} \geq \frac{\Delta d^{1/\Delta}}{200}\,.
\end{align*}
Since $T\geq t$, these $T$-product polynomials are also $t$-product polynomials. This finishes the proof of the claim modulo the proof of Lemma~\ref{lem:prod-depth}, which we present below.
\end{proof}

\begin{proof}[Proof of Lemma~\ref{lem:prod-depth}]
We shall prove by induction on the depth $\Delta$ that we can take $T = t(n,\Delta) =  (\Delta-1)\left(\left(n/p\right)^{1/(\Delta-1)}-1\right)$. Since $\Delta \leq 2\log(n/p),$ this implies that $T\geq \Delta (n/p)^{1/(\Delta-1)}/100.$ 

Let $X$ denote the set of all $n$ underlying variables. 

The base case is when $\Delta = 2$. Here, we have a $\Sigma\Pi\Sigma\Pi\Sigma$ formula such that for all $\Phi$ at layer $2$, $|\Vars(\Phi)| \leq p$. Let $\Psi$ be the output (sum) gate of the formula and $\Psi_1,\ldots,\Psi_r$ be the product gates feeding into it; further let $f_i$ be the polynomial computed by $\Psi_i$. We claim that each $f_i$ is an $(n/p)$-product polynomial. If this is true, we are done since $f = f_1 + \cdots + f_r$ and $r$ is at most $s$. 

To show that $f_i$ is an $(n/p)$-product polynomial, it suffices to show that each $\Psi_i$ has fan-in at least $(n/p).$ This follows since each $\Phi$ at layer $2$ satisfies $|\Vars(\Phi)|\leq p$ and for each sum gate $\Phi'$ at layer $3$, we have $\Vars(\Phi')=\Vars(\Phi)$ for any gate $\Phi$ at layer $2$ feeding into $\Phi'$ (Proposition~\ref{prop:vars} item 2). By Proposition~\ref{prop:vars} item 3, the fan-in of each $\Psi_i$ at layer $4$ must thus be at least $(n/p).$ This concludes the base case. 

Now consider $\Delta \geq 3$. Say we have a polynomial $f$ that is computed by a $(\Sigma\Pi)^{\Delta}\Sigma$ formula $F$ of size at most $s$ and top fan-in (say) $r$. Let $\Psi$ be the output gate of $F$ and $\Psi_1,\ldots,\Psi_r$ the product gates feeding into it; let $f_i$ be the polynomial computed by $\Psi_i.$ It suffices to show that each $f_i$ is the sum of at most $s_i$ many $t(n,\Delta)$-product polynomials, where $s_i$ is the size of the subformula rooted at $\Psi_i.$ We show this now.

Fix any $i\in [r]$. Let the children of $\Psi_i$ be $\Psi_{i,1},\ldots,\Psi_{i,k}.$ Since $X = \Vars(\Psi) = \bigcup_{j=1}^k \Vars(\Psi_{i,j})$ (Proposition~\ref{prop:vars} item 3), there must be some gate $\Psi_{i,j}$ feeding into $\Psi_i$ such that $|\Vars(\Psi_{i,j})|\geq n/k$; w.l.o.g., assume that $j=1$. Applying the induction hypothesis for depth $\Delta-1$ formulas to the polynomial $f_{i,1}\in \F[\Vars(\Psi_{i,1})]$ computed by the subformula rooted at $\Psi_{i,1}$, we obtain
\[
f_{i,1} = \sum_{\ell = 1}^{s_i} h_{i,1,\ell}
\]
where each $h_{i,1,\ell}$ is a $t(n/k,\Delta-1)$-product polynomial. Hence, we see that 
\[
f_i = f_{i,1}\cdots f_{i,k} = \sum_{\ell=1}^{s_i} h_{i,1,\ell} f_{i,2}\cdots f_{i,k}.
\]
Each term in the above decomposition of $f_i$ is a $t'$-product polynomial for $t' = t(n/k,\Delta-1)+(k-1)$ where $k$ is the fan-in of $f_i.$ Some calculus shows that the expression $t(n/k,\Delta-1)+(k-1)$ is minimized when $k = (n/p)^{1/\Delta-1}$. Plugging this into the expression gives $t' \geq t(n,\Delta).$ 

We have thus shown that no matter what $k$ is, $t' \geq t(n,\Delta),$ from which the induction step follows. 
\end{proof}
 

\subsection*{Step 2 -- Rank measure and the hard polynomial} The second step is to show that any such decomposition for $\IMM_d$ must have many terms. Our proof of this step is inspired by the proof of the multilinear formula lower bound of Raz~\cite{Raz} for the determinant and also the slightly weaker lower bound of Nisan and Wigderson~\cite{nw1997} for $\IMM_d$ in the \emph{set-multilinear} case. Following~\cite{Raz}, we define a suitable \emph{random restriction} of the $\IMM_d$ polynomial by assigning variables from the underlying variable set $X$ to $Y\cup Z\cup \{0,1\}$, where $Y$ and $Z$ are disjoint sets of new variables of equal size. The restriction sets distinct variables in $X$ to distinct variables in $Y\cup Z$ or constants, and hence preserves multilinearity. 

Having performed the restriction, we consider the \emph{partial derivative matrix} of the restricted polynomial, which is defined as follows. Let $g\in \F[Y\cup Z]$ be a multilinear polynomial. Define the $2^{|Y|}\times 2^{|Z|}$ matrix $\M_{(Y,Z)}(g)$ such that rows and columns are labelled by distinct multilinear monomials in $Y$ and $Z$ respectively and the $(m_1,m_2)$th entry of $\M_{(Y,Z)}(g)$ is the coefficient of the monomial $m_1\cdot m_2$ in $g$. 

Our restriction is defined to have the following two properties. 
\begin{enumerate}
\item The rank of $\M_{(Y,Z)}(g)$ is equal to its maximum possible value (i.e. $\min\{2^{|Y|},2^{|Z|}\}$) with probability $1$ where $g$ is the restricted version of $\IMM_d$.
\item On the other hand, let $f$ be either a $t$-product polynomial or a $t$-\simple\ polynomial, and let $f'$ denote its restriction under $\rho$. Then, the rank of $\M_{(Y,Z)}(f')$ is small with high probability.
\end{enumerate}

Now, if $\IMM_d$ has a $(\Sigma\Pi)^{\Delta}\Sigma$ formula $F$ of small size, then it is a sum of a small number of $t$-product and $t$-simple polynomials by Lemma~\ref{lem:prod} and hence by a union bound, we will be able to find a restriction under which the partial derivative matrices of each of the these polynomials has small rank. By the subadditivity of rank, this will imply that $\M_{(Y,Z)}(g)$ will itself have low rank, contradicting the first property of our restriction.

To make the above precise, we first define our restrictions. Let $\tilde{Y} = \{y_1,\ldots,y_d\}$ and $\tilde{Z} = \{z_1,\ldots,z_d\}$ be two disjoint sets of variables. A restriction $\rho$ is a function mapping variables $X$ to elements of $\tilde{Y}\cup \tilde{Z}\cup \{0,1\}.$ We consider the following process for sampling a random restriction.

\paragraph*{Notation.} Recall that $M^{(i)}$ is the $2\times 2$ matrix whose $(u,v)$th entry is $x^{(i)}_{u,v}$. Let $I$ and $E$ denote the standard $2\times 2$ identity matrix and the $2\times 2$ flip permutation matrix respectively. 
For $a\in \{1,2\},$ we use $\overline{a}$ to denote the other element of the set. 

\begin{algorithm}
    \caption{Sampling algorithm $\mc{S}$}
  \begin{algorithmic}[1]
    \STATE Choose $\pi$ uniformly at random from $\{1,2\}^d.$ Define $\pi(0) = 1.$
    \STATE Choose $a$ uniformly at random from $\{0,1\}^d.$ Let $A = \{i\ |\ a_i = 1\}.$
    \FOR{$i\in [d]$}
    \STATE Let $b_i = 0$ if $\pi(i-1) = \pi(i)$ and $1$ if $\pi(i-1) \neq \pi(i)$. 
    \ENDFOR
    \FOR{$i =1$ to $d$}
    \IF{$i\not\in A$}
    \STATE Choose $\rho|_{X^{(i)}}$ such that $M^{(i)}$ is $I$ if $b_i=0$ and $E$ if $b_i=1.$ (In particular, all variables are set to constants from $\{0,1\}$.)
    \ELSIF{$i\in A$ and $i$ is the $j$th smallest element of $A$ for odd $j$}
    \STATE Fix
    \[\rho(x^{(i)}_{u,v}) = \left\{
        \begin{array}{ll}
          y_{\lceil j/2\rceil} & \text{if $u = \pi(i-1)$ and $v= \pi(i)$,}\\
          1 & \text{if $u = \pi(i-1)$ and $v= \overline{\pi(i)}$,}\\
          0 & \text{otherwise.}
        \end{array}\right.
    \]
    \ELSE
    \STATE Now, $i\in A$ and $i$ is the $j$th smallest element of $A$ for even $j$. We fix 
    \[\rho(x^{(i)}_{u,v}) = \left\{
        \begin{array}{ll}
          z_{{j}/{2}} & \text{if $u = \pi(i-1)$ and $v= \pi(i)$,}\\
          1 & \text{if $u = \overline{\pi(i-1)}$ and $v= \pi(i)$,}\\
          0 & \text{otherwise.}
        \end{array}\right.
    \]
    \ENDIF
    \ENDFOR  
  \end{algorithmic}
\label{alg:sampling}
\end{algorithm}

  

We give a procedure $\mc{S}$ for sampling a random restriction $\rho:X \rightarrow \tilde{Y}\cup \tilde{Z}\cup \{0,1\}$ in Algorithm~\ref{alg:sampling}. 
Based on the output $\rho$ of $\mc{S}$, we define the (random) sets $Y = \tilde{Y}\cap \Img(\rho)$ and $Z = \tilde{Z}\cap \Img(\rho)$. Let $m = m(\rho) = \min\{|Y|,|Z|\}$.

We observe the following simple properties of $\rho$.
\begin{observation}
\label{obs:rho}
The restriction $\rho$ satisfies the following.
\begin{enumerate}
\item $|Y| = \lceil |A|/2\rceil$ and $|Z| = \lfloor |A|/2\rfloor$. Hence,  $|Z|\leq |Y|\leq |Z|+1$ and $m = |Z|.$
\item Distinct variables in $X$ cannot be mapped to the same variable in $Y\cup Z.$
\item Only the variables of the form $x^{(i)}_{\pi(i-1),\pi(i)}$ can be set to variables in $Y\cup Z$ by $\rho$. The rest are set to constants. 
\end{enumerate}
\end{observation}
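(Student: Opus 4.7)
The plan is to verify each item directly from the definition of the sampling algorithm $\mc{S}$; no clever insight is required, just careful bookkeeping. I would organize the argument by first handling item 3, since items 1 and 2 are essentially consequences of it.

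For item 3, I would inspect the three branches of the loop at lines 6--14. If $i \notin A$, all four variables of $X^{(i)}$ are set to constants in $\{0,1\}$ (chosen so that $M^{(i)} = I$ or $E$), so no variable of $X^{(i)}$ maps into $\tilde{Y}\cup\tilde{Z}$. If $i \in A$ and has rank $j$ in $A$ that is odd, then by the explicit formula, only the entry $(u,v) = (\pi(i-1),\pi(i))$ is mapped to the variable $y_{\lceil j/2\rceil}$, while the other three entries are set to $0$ or $1$. The even-$j$ case is symmetric, with $z_{j/2}$ in place of $y_{\lceil j/2\rceil}$. In every case, the unique variable of $X^{(i)}$ that is mapped to something in $\tilde{Y}\cup\tilde{Z}$ (when any is) is $x^{(i)}_{\pi(i-1),\pi(i)}$, proving item 3.

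For item 2, by item 3, the only variables of $X$ that can be mapped into $\tilde{Y}\cup\tilde{Z}$ are those of the form $x^{(i)}_{\pi(i-1),\pi(i)}$ for $i \in A$, so two distinct $X$-variables that are both mapped into $\tilde{Y}\cup\tilde{Z}$ must come from distinct indices $i,i' \in A$ with distinct ranks $j\neq j'$ in $A$. If $j$ and $j'$ have the same parity, their images are either both $y$-variables with distinct subscripts $\lceil j/2\rceil \neq \lceil j'/2\rceil$ or both $z$-variables with distinct subscripts $j/2 \neq j'/2$; if they have different parities, one image is a $y$-variable and the other a $z$-variable. In every case the images are distinct.

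For item 1, items 2 and 3 together establish a bijection between $A$ and $\rho(X) \cap (\tilde{Y}\cup\tilde{Z})$ under which indices of odd rank in $A$ contribute exactly one element each to $Y$ and indices of even rank contribute exactly one each to $Z$; counting odd and even ranks in $[|A|]$ gives $|Y| = \lceil|A|/2\rceil$ and $|Z| = \lfloor|A|/2\rfloor$. The remaining assertions $|Z|\leq|Y|\leq|Z|+1$ and $m = |Z|$ are immediate arithmetic consequences. The ``main obstacle,'' such as it is, lies only in presenting the case analysis cleanly and fixing unambiguous notation for the rank of $i$ within $A$; there is no substantive mathematical difficulty.
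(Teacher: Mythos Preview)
Your proposal is correct and matches the paper's intent: the paper states this as an observation with no proof, implicitly leaving the verification to the reader as an immediate consequence of the sampling algorithm's definition. Your case analysis is exactly the routine check one would write out, and the ordering (item 3 first, then 2, then 1) is a sensible way to organize it.
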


Note that $b$ is distributed uniformly over $\{0,1\}^d.$
Given a polynomial $f\in \F[X]$, the restriction $\rho$ yields a natural polynomial $f|_\rho\in \F[Y\cup Z]$ by substitution. Note, moreover, that if $f$ is multilinear then so is $f|_\rho$ since distinct variables in $X$ cannot be mapped to the same variable in $Y\cup Z$ (Observation~\ref{obs:rho}).

\begin{lemma}
\label{lem:rand-rest}
Let us assume that $\rho$ is sampled as above. Then we have the following:
\begin{enumerate}
\item $\rank(\M_{(Y,Z)}(\IMM_d|_\rho)) = 2^m$ with probability $1$. 
\item If $f\in \F[X]$ is any $t$-product polynomial, then for some absolute constant $\varepsilon > 0,$ $$\prob{}{\rank(\M_{(Y,Z)}(f|_\rho)) \geq 2^{m - \varepsilon t}} \leq \frac{1}{2^{\Omega(t)}}.$$
\item If $f\in \F[X]$ is any $r$-\simple\ polynomial, then for some absolute constant $\delta > 0,$
$$
\prob{}{\rank(\M_{(Y,Z)}(f|_\rho)) \geq 2^{m - \delta r}} \leq \frac{1}{2^{\Omega(r)}}.
$$
\end{enumerate}
\end{lemma}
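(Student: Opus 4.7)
The plan is to analyze $\IMM_d|_\rho$ by tracing paths $\sigma$ in $G_d$. Under $\rho$, matrices $M^{(k)}$ with $k\notin A$ are permutation matrices (either $I$ or $E$) that deterministically transport state along $\pi$ (or along $\overline{\pi}$ if $\sigma$ has already left $\pi$), so a contributing path can transition only at $A$-positions. A direct case analysis shows that at an odd-rank $A$-position, the path must be on $\pi$ and may pick up either the corresponding $y$-variable (staying on $\pi$) or a $1$ (leaving $\pi$); at an even-rank $A$-position, the path is forced onto $\pi(k)$ and picks up $z_{j/2}$ (if on $\pi$) or $1$ (if off). Grouping consecutive odd/even $A$-positions, each pair contributes a factor $(1+y_jz_j)$, together with a trailing factor $(1+y_{m+1})$ when $|A|$ is odd. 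The partial-derivative matrix of $\prod_{j=1}^m(1+y_jz_j)$ is (after indexing rows/columns by subsets of $[m]$) the $2^m\times 2^m$ identity matrix, and the optional trailing factor merely duplicates rows without changing the column space. Hence $\rank(\M(\IMM_d|_\rho))=2^m$ with probability~$1$.

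\textbf{Part~2 (setup).} Since the $X_i^f$'s are disjoint, $\M_{(Y,Z)}(f|_\rho)=\bigotimes_i \M_{(Y_i,Z_i)}(f_i|_\rho)$, so $\rank(\M(f|_\rho))\leq 2^{\sum_i\min(|Y_i|,|Z_i|)}$. Using the identity $\min(a,b)=(a+b-|a-b|)/2$ together with $|Y|+|Z|=|A|\leq 2m+1$, the target bound $\rank\leq 2^{m-\varepsilon t}$ reduces to showing $\sum_i\bigl||Y_i|-|Z_i|\bigr|\geq 2\varepsilon t+1$ with probability $\geq 1-2^{-\Omega(t)}$. Defining $T_i(\pi):=\{k\in[d]:x^{(k)}_{\pi(k-1),\pi(k)}\in X_i^f\}$, the sets $\{T_i(\pi)\}_i$ partition $[d]$ (since $\{X_i^f\}$ partitions $X$ and $\pi$ picks exactly one variable per matrix). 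Writing $|Y_i|-|Z_i|=\sum_{k\in T_i(\pi)\cap A}\xi_k$ with each $\xi_k\in\{\pm 1\}$, an odd-length signed sum of $\pm 1$'s is never zero, so $\bigl||Y_i|-|Z_i|\bigr|\geq I_i:=\mathbf{1}[\,|T_i(\pi)\cap A|\text{ odd}\,]$. Conditioned on $\pi$, the disjointness of the $T_i$'s makes the $I_i$'s independent, each Bernoulli$(1/2)$ when $T_i(\pi)\neq\emptyset$.

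\textbf{Part~2 (main obstacle).} The remaining step, which I expect to be the hardest, is to show that $N(\pi):=|\{i:T_i(\pi)\neq\emptyset\}|\geq\Omega(t)$ with probability $\geq 1-2^{-\Omega(t)}$ over $\pi$; given this, a Chernoff bound on the conditionally-independent $\{I_i\}$ finishes Part~2. My plan is to select, for each of the all-but-$O(1)$ blocks that contain a ``live'' variable at some matrix $k_i\geq 2$, a witness variable $v_i=x^{(k_i)}_{u_i,w_i}$. Since each matrix holds only four variables, at most four blocks share any single matrix index, so $\{k_i\}$ contains $\geq (t-O(1))/4$ distinct values, from which a greedy sweep extracts $\Omega(t)$ indices that are pairwise at distance $\geq 2$. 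For this separated subfamily the hit events $[(\pi(k_i-1),\pi(k_i))=(u_i,w_i)]$ depend on disjoint coordinates of $\pi$, hence are mutually independent Bernoulli$(1/4)$, and a standard Chernoff bound gives the desired lower bound on $N(\pi)$.

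\textbf{Part~3.} The same tensor-product decomposition yields $\rank(\M(f|_\rho))\leq 2^{r'}\cdot 2^{\min(|Y_G|,|Z_G|)}$, because each affine $L_i|_\rho$ has a partial-derivative matrix of rank at most $2$. Bounding $\min(|Y_G|,|Z_G|)\leq (|A|-N_V)/2$ with $N_V:=|\{k\in A : x^{(k)}_{\pi(k-1),\pi(k)}\in V\}|$ for $V:=\bigcup_{i\leq r'}\vars(L_i)$, the target $\rank\leq 2^{m-\delta r}$ reduces to $N_V\geq (2+2\delta)r+1$ with probability $\geq 1-2^{-\Omega(r)}$. Since $|V|\geq 400r$, one computes $\mathbb{E}[N_V]=\sum_k (1/2)(|V\cap X^{(k)}|/4)=|V|/8\geq 50r$. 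Splitting $[d]$ into even- and odd-indexed classes makes the per-index indicators depend, within each class, on disjoint triples $(a_k,\pi(k-1),\pi(k))$ and hence mutually independent; Chernoff applied separately to each parity class delivers the required concentration and completes the proof.
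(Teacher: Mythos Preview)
Your proposal is correct and follows essentially the same approach as the paper: Part~1 via the explicit factorization $\prod(1+y_jz_j)$, Part~2 via multiplicativity of rank, the parity/imbalance argument over the partition $\{T_i(\pi)\}$, and a Chernoff bound showing $\Omega(t)$ colors appear on $\pi$, and Part~3 via the bound $\rank(\prod L_i|_\rho)\leq 2^{r'}$ together with a Chernoff bound showing $|U|_\rho|\geq\Omega(r)$. The only cosmetic differences are that you extract independence in the ``many colors''/``many variables'' claims by greedily selecting indices at pairwise distance $\geq 2$ (the paper instead restricts to a single parity class, which achieves the same thing), and you bound $\rank(\prod L_i|_\rho)$ via $\rank(L_i|_\rho)\leq 2$ per factor rather than the paper's expansion into $2^{r'}$ rank-$1$ summands---both yield the same $2^{r'}$ bound.
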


\noindent
Given Lemmas~\ref{lem:prod} and \ref{lem:rand-rest}, we can finish the proof of Theorem~\ref{thm:mainlbd} as follows.

\begin{proof}[Proof of Theorem~\ref{thm:mainlbd} assuming Lemma~\ref{lem:rand-rest}]
  Assume that $\IMM_d$ is computed by a syntactic mulitlinear $(\Sigma\Pi)^{\Delta}\Sigma$ formula $F$ of size at most $s$. By Lemma~\ref{lem:prod}, we get that $f$ can be expressed as a sum of at most $2s$ many summands, say $f_1,f_2, \dots, f_s$ and $g_1,g_2,\ldots,g_s$, where each summand $f_i$ is a $t$-product polynomial and each summand $g_j$ is a $t$-\simple\  polynomial for $t = \Omega(\Delta d^{1/\Delta})$.
  
  For each $i\in [s],$ Lemma~\ref{lem:rand-rest} implies that 
  \begin{align*}
\prob{}{\rank\left(M_{(Y,Z)}(f_i|_{\rho})\right) \geq 2^{m - \varepsilon t} }\leq \frac{1}{2^{\Omega(t)}}\ \text{and}\  \prob{}{\rank\left(M_{(Y,Z)}(g_i|_{\rho})\right) \geq 2^{m - \delta t} }\leq \frac{1}{2^{\Omega(t)}},
  \end{align*} 
  where $\varepsilon$ and $\delta$ are absolute constants. 
  
  Thus, unless $s \geq 2^{\Omega(t)},$ we see by a union bound that there exists a $\rho$ such that for each $i\in [s]$, $\rank\left(M_{(Y,Z)}(f_i|_{\rho})\right)\leq 2^{m-\varepsilon t}$ and $\rank\left(M_{(Y,Z)}(g_i|_{\rho})\right)  \leq 2^{m-\delta t}.$ For such a $\rho$, we have
 \[
 \rank(M_{(Y,Z)}(F|_{\rho})) \leq 2^m \cdot \left(\frac{s}{2^{\varepsilon t}} + \frac{s}{2^{\delta t}}\right) < 2^m
 \]
  unless $s \geq 2^{\Omega(t)}.$
  
  From Lemma~\ref{lem:rand-rest}, we also know that for \emph{any} choice of $\rho$ in the sampling algorithm $\mc{S},$ we have $\rank(M_{(Y,Z)}(\IMM_d|_\rho)) \geq 2^m.$ In particular, since $F$ computes $\IMM_d$, we must have $s \geq 2^{\Omega(t)} = 2^{\Omega(\Delta d^{1/\Delta})}.$
\end{proof}

\subsection{Proof of Lemma~\ref{lem:rand-rest}}
\noindent
\subsubsection*{Part 1: $\IMM_d$ has high rank} 
\label{sec:p1}
Let $\pi \in \{1,2\}^d$ and $a \in \{0,1\}^d$ be arbitrary. Note that in our sampling algorithm, $\rho, A, b$ are completely determined given $\pi$ and $a$. 

Let us now examine the effect of $\rho$ on $\IMM_d$. We take the graph theoretic view of the polynomial $\IMM_d$ as given in Section~\ref{sec:basic}.

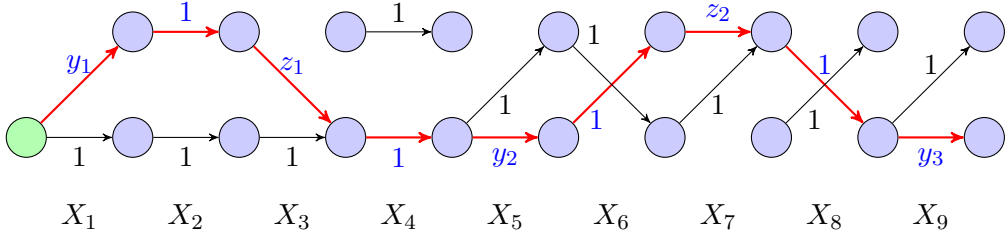
\begin{figure}
  \centering
  \begin{tikzpicture}[scale=0.7]
    \node[initialvertex] (v0) at (0,0) {};
    \foreach \x [evaluate=\x as \lbl using 2*\x-1] in {1,2,...,9}
    {
      \node[vertex] (v1\x) at (2*\x,0) {};
      \node[vertex] (v2\x) at (2*\x,2) {};
      \node (label\x) at (\lbl, -1.5) {$X_{\x}$};
    }

    \path[->]
    (v0) edge node[below] {$1$} (v11)
    (v11) edge node[below] {$1$} (v12)
    (v12) edge node[below] {$1$} (v13);

    \path[->, thick, red]
    (v0) edge node[en, above] {$y_1$} (v21)
    (v21) edge node[en, above] {$1$} (v22)
    (v22) edge node[en, above] {$z_1$} (v13)
    (v13) edge node[en, below] {$1$} (v14)
    (v14) edge node[en, below] {$y_2$} (v15)
    (v15) edge node[en, below, pos=0.3] {$1$} (v26)
    (v26) edge node[en, above] {$z_2$} (v27)
    (v27) edge node[en, above] {$1$} (v18)
    (v18) edge node[en, below] {$y_3$} (v19);
    
    \path[->] (v23) edge node[above] {$1$} (v24);

    \path[->]
    (v14) edge node[below] {$1$} (v25)
    (v25) edge node[above, pos=0.1, xshift=5pt] {$1$} (v16)
    (v17) edge node[below, pos=0.3, xshift=2pt] {$1$} (v28)
    (v16) edge node[below] {$1$} (v27);

    \path[->]
    (v18) edge node[above] {$1$} (v29)
    ;

  \end{tikzpicture}
  \caption{Effect of $\rho$ on $\IMM_9$ when the sampling algortithm $\mathcal{S}$ yields $\pi = (2,2,1,1,1,2,2,1,1)$ and $a = (1,0,1,0,1,0,1,0,1)$. Thus, $\IMM_9|_{\rho}$ in this case yields us $(1+y_1z_1)(1+y_2z_2)(1+y_3)$.}
  \label{fig:imm-rho}
\end{figure}

Figure~\ref{fig:imm-rho} illustrates how this restriction affects the variables labelling the edges of the graph $G_d$ defined in Section~\ref{sec:basic}. By substituting according to $\rho$ in (\ref{eq:imm-def}), we get that

 \[
 \begin{array}{ll}
 \IMM_d(X)|_\rho = & \left\{
 \begin{array}{lll}
 & \prod_{i=1}^m \left(1+y_iz_i\right) & \text{if } |A| = 2m \\
 & & \\
 &\prod_{i=1}^{m} \left(1+y_iz_i\right) \cdot \left(1+ y_{m+1} \right)& \text{if } |A| = 2m +1\,, \\
 \end{array} \right.
 \end{array}\]
 where $m = |Z|$. 
For any $S \subseteq [m]$, let $Z_S$ (resp., $Y_S$) denote the monomial $\prod_{i \in S} z_i$ (resp., $\prod_{i \in S} y_i$). Now consider the matrix $\M_{(Y,Z)}(\IMM_d|_\rho)$\,. We will simply use $\mathcal{M}$ to denote this matrix. For the sake of simplicity let us assume that $|A| =2m$. (The case when $|A|=2m+1$ is similar.) Let the rows and columns of $\mathcal{M}$ be labelled by the subsets of $[m]$ and let $\mathcal{M}(S,T)$ be the coefficient of $Y_S\cdot Z_T$ in  $\IMM_d|_\rho$. It is easy to see that $\mathcal{M}(S,T) = 0$ if $S \neq T$ and $1$ otherwise. That is, $\mathcal{M}$ is the Identity matrix of size $2^m\times 2^m$ and hence it has full rank.\footnote{If $|A|=2m+1$ then $\mathcal{M}$ has a $2^m\times 2^m$ sized Identity matrix as a submatrix.} \qed.

\subsubsection*{Part 2: $t$-product polynomials have low rank}  
\label{sec:p2}
We now prove that for a $t$-product polynomial $f$, $\rank(M_{(Y,Z)}(f|_\rho))$ is small with high probability. 

Let $f$ be a $t$-product polynomial, i.e. $f = f_1f_2\ldots f_t$. Let $\chi: X \rightarrow [t]$ be a coloring function, which assigns colors to all the variables in $X$, so that $\chi^{-1}(i) = X^f_i$, where $X_i^f$ is the variable set ascribed to $f_i$. That is, all the variables ascribed to $f_i$ are assigned color $i$ under the coloring function. To prove the lemma, we will first show that, with high probability (over the choice of $\pi$), a constant fraction of the $t$ colors appear along the path defined by $\pi$, i.e. along $(\pi(0),\pi(1)), (\pi(1),\pi(2)), \ldots, (\pi({d-1}),\pi(d))$. Given such a \emph{multi-colored path}, we will then show that with a high probability, over the choice of $a$, many of the colors have \emph{an imbalance}. A color is said to have an imbalance under $\rho$ if more variables from $X$ of that color are mapped to the $Y$ variables than the $Z$ variables or vice versa. We will then appeal to  arguments that are similar to those in~\cite{Raz,ry09,DMPY12} to conclude that the imbalance results in a low rank. 

\vspace*{5pt}
\noindent
\textbf{Variable coloring, $t$-product polynomials and imbalance.}
We start with some notation. Given a string $\pi \in \{1,2\}^d$, let the path defined by $\pi$ be the following sequence of pairs $(\pi(0),\pi(1)),$ $(\pi(1),\pi(2)), \ldots,$ $(\pi({d-1}),\pi(d))$ (we call it a path since these pairs correspond naturally to the edges of a path in the graph $G_d$ defined in Section~\ref{sec:basic}). We say that a color $\gamma \in [t]$ appears in layer $\ell\in [d]$ if there exists $u,v \in \{1,2\}$ such that $\gamma = \chi(x_{u,v}^{(\ell)})$. 

Let $C^{0} = \emptyset$ and let $C^i = C^{i-1} \cup \{\chi(x_{u,v}^{(i)}) \mid {u,v \in\{1,2\}}\}$ for $i \in [d]$, i.e., $C^i$ contains all the distinct colors appearing in layers $\{1,2 \ldots, i\}$. Therefore, $|C^d| = t$. 
We will also define $O^{2i+1}$ to be all the colors appearing in odd numbered layers up to $2i+1$, i.e. $O^{2i+1} = O^{2i-1} \cup \{\chi(x_{u,v}^{(2i+1)}) \mid {u,v \in\{1,2\}}\}$. Similarly, we define $E^{2i} = E^{2i-2} \cup \{\chi(x_{u,v}^{(2i)}) \mid {u,v \in\{1,2\}}\}$. 

Let $C_\pi^0 = \emptyset$ and $C_\pi^i = C_\pi^{i-1} \cup \{\chi(x_{(\pi(i-1),\pi(i))}^{(i)})\}$, i.e. $C_\pi^i$ contains all the distinct colors appearing along the path defined by $\pi$ up to layer $i$. 
We first observe a property of $C_\pi^d$ stated in the claim below.

\begin{claim}
\label{cl:many-color-paths}
If $|C^d| =t$, then $\prob{\pi}{|C_\pi^d| \leq t/100} \leq 1/2^{\Omega(t)}$\,.
\end{claim}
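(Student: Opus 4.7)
The plan is to exploit a hidden independence in the edges of the path and then apply a Chernoff bound. Since $C^d = O^d \cup E^d$ has size $t$, at least one of these sets has size $\geq t/2$, and by symmetry I shall assume $|O^d| \geq t/2$; the other case is completely analogous using even layers in place of odd ones.

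The key structural observation I would establish is that the odd-indexed edges $e_1, e_3, e_5, \ldots$ along the path $\pi$ are \emph{mutually independent}. Indeed, $e_{2k+1} = (\pi(2k), \pi(2k+1))$ depends only on the $\pi$-coordinates in $\{2k, 2k+1\}$ (recall $\pi(0)=1$ is fixed), and for distinct $k$ these index sets are disjoint; since $\pi$ is uniform on $\{1,2\}^d$, the joint distribution factorises. Moreover, for $k \geq 1$ each $e_{2k+1}$ is uniform over $\{1,2\}^2$, while $e_1$ is uniform over $\{(1,1),(1,2)\}$.

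Next I would disjointify the colors of $O^d$ across odd layers. For each odd $i$, let $S_i^{\mathrm{r}}$ be the set of colors $\chi(x^{(i)}_{u,v})$ such that $(u,v)$ is a possible value of $e_i$; this set has size at most $4$, and for $i=1$ it excludes the positions $(2,1)$ and $(2,2)$ because $\pi(0)=1$. Set $T_i = S_i^{\mathrm{r}} \setminus \bigcup_{j<i,\,j\text{ odd}} S_j^{\mathrm{r}}$. The sets $T_i$ are pairwise disjoint, and their union misses only those colors of $O^d$ that occur \emph{exclusively} at $x^{(1)}_{2,1}$ or $x^{(1)}_{2,2}$ among odd layers. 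Hence $\sum_i |T_i| \geq |O^d| - 2 = \Omega(t)$, where I may assume $t$ exceeds an absolute constant (the claim is vacuous otherwise, since $t/100 < 1$ and $|C_\pi^d| \geq 1$ deterministically would suffice).

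Finally I would Chernoff. Define $X_i = \mathbf{1}[\chi(e_i) \in T_i]$ for odd $i$. The $X_i$ are mutually independent by the observation above, and since each color in $T_i \subseteq S_i^{\mathrm{r}}$ is selected by $e_i$ with probability at least $1/4$, we have $\mathbf{E}[X_i] \geq |T_i|/4$, so $\mathbf{E}\bigl[\sum_i X_i\bigr] \geq \tfrac{1}{4}\sum_i|T_i| = \Omega(t)$. A standard Chernoff tail bound then gives $\Pr[\sum_i X_i \leq t/100] \leq 2^{-\Omega(t)}$. Crucially, whenever $X_i = X_j = 1$ for odd $i \neq j$, the selected colors $\chi(e_i) \in T_i$ and $\chi(e_j) \in T_j$ must be distinct, as $T_i \cap T_j = \emptyset$; consequently $|C_\pi^d| \geq \sum_i X_i$ pointwise, and the claim follows. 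The only delicate step is verifying the independence of the odd-indexed (or symmetrically even-indexed) sub-sequence of edges and tracking the small boundary loss from $\pi(0)$ being fixed; everything else is a routine concentration estimate.
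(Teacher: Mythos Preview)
Your proof is correct and follows essentially the same approach as the paper: split into odd versus even layers to obtain mutually independent edge choices, isolate one ``new'' color per relevant odd layer, and apply a Chernoff bound to the resulting independent indicators. Your version is slightly more careful in two respects---you keep the full set $T_i$ of new colors at each layer rather than a single representative, and you explicitly account for the boundary loss at layer~$1$ from $\pi(0)=1$---but these are cosmetic refinements of the same argument.
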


We will assume the claim and finish the proof of Part $2$ of Lemma~\ref{lem:rand-rest}. We will then prove the claim. 
The above claim shows that a lot of colors appear on the uniformly random path $\pi$ with high probability. Using this, we will now show that a constant fraction of these colors also exhibit an imbalance with a high probability. Using the multiplicativity of the rank, we will then show that the  imbalance for a large number of factors results in the low rank of the matrix $M_{Y,Z}(f|_\rho)$.

We will say that $\pi$ is good if $|C_\pi^d| > t/100$. Let $L = t/100$. The above claim shows that a random $\pi$ is good with high probability. In what follows, we condition on picking a good $\pi$. Let $a \in \{0,1\}^d$ be chosen uniformly at random as in the sampling algorithm. Let $\rho$ be defined as in the sampling algorithm for $\pi,a$. 

Let $\gamma \in C_{\pi}^d$ be a color that appears along $\pi$. Let $\pi_\gamma$ be the elements along the path defined by $\pi$ with color $\gamma$, i.e. $\pi_\gamma = \{(\pi(i-1), \pi(i)) \mid \chi(x_{(\pi(i-1), \pi(i))}^{(i)}) = \gamma\}$. Let $\rho(\pi_\gamma) =$ $\{\rho(x_{(\pi(i-1), \pi(i))}^{(i)}) \mid (\pi(i-1), \pi(i)) \in \pi_\gamma\}\cap (Y\cup Z)$\,. 
A color $\gamma \in [t]$ is said to have an imbalance w.r.t. $\rho$ if $||\rho(\pi_\gamma)\cap Y|- |\rho(\pi_\gamma)\cap Z|| \geq 1$. 

It is easy to see that if $|\rho(\pi_\gamma)|$ is odd, then $\gamma$ must have an imbalance w.r.t. $\rho$. Note that the former event is equivalent to the event that $\bigoplus_{i\in P_\gamma} a_i$ equals $1$ where $P_\gamma = \{i\ |\ (\pi(i-1),\pi(i)) \in \pi_\gamma\}.$
Hence for any $\gamma \in C_\pi^d$,
$\prob{}{\text{$\gamma$ has an imbalance with respect to $\rho$ along $\pi$}}  = 1/2$. 
%
Further, since $|C^d_\pi| \geq L$ and the events corresponding to distinct $\gamma\in C^d_{\pi}$ are mutually independent, the Chernoff bound implies
$
\prob{}{\text{at most $L/4$ colors have an imbalance with respect to $\rho$ along $\pi$}}  \leq 1/2^{\Omega{(L)}}.
$ \qed.

Assuming Claim~\ref{cl:many-color-paths} we are now done. We now present the proof of Claim~\ref{cl:many-color-paths}.

\begin{proof}[Proof of Claim~\ref{cl:many-color-paths}]
We define $O^{2i+1}_\pi$ to be all the colors appearing in odd numbered layers along $\pi$ up to the layer $2i+1$, i.e. $O^{2i+1}_\pi = O^{2i-1}_\pi \cup \{\chi(x_{(\pi(2i),\pi(2i+1))}^{(2i+1)})\}$. Similarly, we define $E^{2i}_\pi = E^{2i-2}_\pi \cup \{\chi(x_{(\pi(2i-1),\pi(2i))}^{(2i)})\}$. 

We know that $|C^d| = t$. Therefore, either $|O^d| \geq t/2$ or $|E^d| \geq t/2$. Let us assume without loss of generality that $|O^d| \geq t/2$. 
For this part of the proof, for the sake of simplicity, we will assume that $d$ is odd. The assumption can be easily removed by losing at most constant factors in the bound. 

Let $j_1, j_2, \ldots, j_\tau$ be odd indices such that for each $1 \leq i \leq \tau-1$\,, $|O^{j_i}| < |O^{j_{i+1}}|$\,, i.e. each $O^{j_i}$ has at least one new color. Let $\gamma_1, \gamma_2, \ldots, \gamma_\tau$ be colors which appear new in these sets. (If multiple new colors appear in a set then choose any one.)

 Let $W_i$ be the indicator random variable, which takes value $1$ if $|O^{j_i}_\pi| < |O^{j_{i+1}}_\pi|$ and $0$ otherwise, where $1 \leq i \leq \tau-1$. Then $\avg{}{W_i} = 1/4$ as the probability of the color $\gamma_i$ appearing in $O^{j_i}_\pi$ is equal to $1/4$. Note that the $W_i$s are independently distributed since they depend on distinct co-ordinates of $\pi$. 
Now $\avg{}{\sum_i W_i} \geq t/8$, as $|O^d| \geq t/2$. 
Now we get, 
\[\begin{array}{ll}
\prob{\pi}{|C_\pi^d| \leq t/100} & \leq \prob{\pi}{|O^{d}_\pi| \leq t/100} \\ 
& \leq \prob{\pi}{\sum_i W_i \leq t/100} \\
& \leq 1/2^{\Omega(t)}\,.\\
\end{array}
\]
As $|O^d_\pi| \leq |C^d_\pi|$, the first inequality follows. If the number of times a new color appears along $\pi$ within the odd layers is at most $t/100$, then $\sum_iW_i$ is also at most $t/100$, therefore we get the second inequality. Finally the last inequality follows by the Chernoff bound.

\end{proof}

\vspace*{5pt}
\noindent
\textbf{Imbalance implies low rank.}
\noindent
Let us recall that $f=f_1f_2\dots f_t$ is a $t$-product polynomial that is defined over the disjoint variable partition $X = X_1\cup X_2 \cup \dots \cup X_t$ such that $\abs{X_i} \geq 1$ for all $i\in[t]$. The following lemma (see, e.g.,~\cite{ry09}) will be useful in bounding $\rank(M_{(Y,Z)}(f|_\rho)).$
\begin{lemma}[\cite{ry09}, Proposition 2.5]
\label{lem:mult}
Let $g = g_1g_2\cdots g_t$ be a $t$-product polynomial over the set of variables $Y\cup Z$ where $\Vars(g_i) = Y_i \cup Z_i.$ Then $\rank(M_{(Y,Z)}(g)) = \prod_{i\in [t]}\rank(M_{(Y_i,Z_i)}(g_i)).$
\end{lemma}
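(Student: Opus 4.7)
The plan is to show that $M_{(Y,Z)}(g)$ is (up to reordering of row and column indices) the Kronecker/tensor product of the matrices $M_{(Y_i,Z_i)}(g_i)$, after which the conclusion follows from the standard fact that $\rank(A \otimes B) = \rank(A)\cdot \rank(B)$.

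First, I would unpack the data. Since the sets $\Vars(g_i) = Y_i \cup Z_i$ partition $Y\cup Z$, we get a corresponding partition $Y = Y_1 \sqcup \cdots \sqcup Y_t$ and $Z = Z_1 \sqcup \cdots \sqcup Z_t$. Consequently, any multilinear monomial $m$ in $Y \cup Z$ factors uniquely as $m = m^Y \cdot m^Z$ with $m^Y$ a multilinear monomial in $Y$ and $m^Z$ in $Z$, and each of these further factors uniquely as $m^Y = m^Y_1 \cdots m^Y_t$ and $m^Z = m^Z_1 \cdots m^Z_t$ with $m^Y_i$ (resp.\ $m^Z_i$) a monomial in $Y_i$ (resp.\ $Z_i$). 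This gives natural bijections between the index set of rows (columns) of $M_{(Y,Z)}(g)$ and tuples of row (column) indices from the factor matrices $M_{(Y_i,Z_i)}(g_i)$.

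Next I would compute the entry of $M_{(Y,Z)}(g)$ at position $(m^Y, m^Z)$. Because the variable sets $\Vars(g_i)$ are pairwise disjoint, the expansion of $g = g_1 \cdots g_t$ as a sum of monomials is simply the product of the expansions of the $g_i$, and hence the coefficient of $m^Y \cdot m^Z$ in $g$ equals $\prod_{i=1}^t [\text{coeff of } m^Y_i \cdot m^Z_i \text{ in } g_i]$. Reading this formula off, we see that under the bijections above,
\[
M_{(Y,Z)}(g) \;=\; M_{(Y_1,Z_1)}(g_1) \otimes M_{(Y_2,Z_2)}(g_2) \otimes \cdots \otimes M_{(Y_t,Z_t)}(g_t),
\]
up to a simultaneous permutation of rows and of columns induced by the reindexing.

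Finally, I would invoke the multiplicativity of rank under Kronecker products (which follows, e.g., by writing each $M_{(Y_i,Z_i)}(g_i)$ as a sum of $\rank(M_{(Y_i,Z_i)}(g_i))$ rank-one matrices, distributing, and checking the resulting expression is a rank decomposition using the tensor structure). Since row/column permutations do not change the rank, this yields $\rank(M_{(Y,Z)}(g)) = \prod_{i \in [t]} \rank(M_{(Y_i,Z_i)}(g_i))$, which is the claim. The only mildly delicate point is bookkeeping: making sure that the monomial factorization is unique (which uses both multilinearity and the disjointness of variable sets) and that the identification with a Kronecker product is truly an equality of matrices after the permutation; neither is a real obstacle.
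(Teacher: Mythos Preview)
Your proposal is correct and is the standard proof of this fact. Note that the paper does not actually prove this lemma: it is simply quoted from~\cite{ry09} (Proposition 2.5) and used as a black box, so there is no paper-proof to compare against. The Kronecker-product argument you give is exactly the usual one.
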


From Lemma~\ref{lem:mult}, we get that $\rank(M_{(Y,Z)}(f|_\rho)) = \prod_{i=1}^{t}\rank(M_{(Y_i,Z_i)}(f_i|_\rho))$ where $Y_i = Y \cap \{\rho(x) | x\in X_i\}$ and $Z_i = Z \cap \{\rho(x) | x \in X_i\}\,.$ For all $i\in[t]$, from the definition it is clear that the rank of the matrix $M_{(Y_i,Z_i)}(f_i|_\rho)$ is upper bounded by $2^{\min\{|Y_i|,|Z_i|\}}\leq 2^{(\abs{Y_i}+\abs{Z_i})/2}$. Let us note that these disjoint partitions in the $t$-product polynomial correspond to the colors in the coloring $\chi$ with all variables in $X_i$ colored $i$.  Hence if color $i$ has imbalance w.r.t. $\rho$, then $\rank(M_{(Y_i,Z_i)}(f_i|_\rho))\leq 2^{\min\{|Y_i|,|Z_i|\}}\leq 2^{(\abs{Y_i}+\abs{Z_i}-1)/2}$. Thus, $\rank(M_{(Y,Z)}(f|_\rho)) \leq \prod_{i=1}^t 2^{(\abs{Y_i}+\abs{Z_i}-1)/2} = 2^{((\abs{Y}+\abs{Z})/2)-(\ell/2)} \leq 2^{m-(\ell-1)/2}$ 
where $\ell$ is the number of colors that have imbalance w.r.t. $\rho$. From the above discussion, we can infer that
  $\prob{\pi}{\rank\left(M_{Y,Z}(f|_\rho)\right) \geq 2^{m- t/1000}} \leq  \prob{\pi}{\ell \leq t/400} \leq \frac{1}{2^{\Omega(t)}}$.
  
\subsubsection*{Part 3: $r$-\simple\ polynomials have low rank.}
\label{sec:p3}
Here we prove that if $f\in \F[X]$ is any $r$-\simple\ polynomial, then for some absolute constant $\delta > 0,$
$\prob{}{\rank(\M_{(Y,Z)}(f|_\rho)) \geq 2^{m - \delta r}} \leq \frac{1}{2^{\Omega(r)}}$.

As $f$ is an $r$-\simple\ polynomial we know that $f = \left(\prod_{i=1}^{r'} L_i\right)\cdot G$, where $r'\leq r$, $L_i$s are linear polynomials, $\forall i \in [r']$ $X_i$ is the set of variables ascribed to $L_i$\, and $X_{r'+1}$ is the set of variables ascribed to $G$. Moreover, $|\cup_{i=1}^{r'} X_i| \geq 400r$. 

To prove the above statement we set up some notation.
Let $f|_\rho = \left(\prod_{i=1}^{r'} L_i|_\rho \right)\cdot G|_\rho$\,. Let $Y_i = \{\rho(x) \mid x \in X_i\} \cap Y$ and $Z_i = \{\rho(x) \mid x \in X_i\} \cap Z$ for each $i \in [r']$. Let $Y' = \cup_{i=1}^{r'} Y_i$ and $Z' = \cup_{i=1}^{r'}Z_i$. Also, let $Y'' = Y \setminus Y'$ and $Z'' = Z \setminus Z'$. Let $U$ denote $\cup_{i=1}^{r'} X_i$ and let $U|_\rho = \cup_{i=1}^{r'} Y_i \cup \cup_{i=1}^{r'} Z_i$. 

In the following claim we show that if $U$ is a large set to begin with then with high probability (over the restriction $\rho$ defined by the sampling algorithm), $U|_\rho$ is also large. 

\begin{claim}
\label{cl:many-variables}
If $|U|\geq 400r$, then $\prob{}{|U|_\rho| \leq 4 r} \leq \frac{1}{2^{\Omega(r)}}.$
\end{claim}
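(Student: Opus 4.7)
The plan is to exhibit many independent ``contributing events'' for $\rho$ and then apply a Chernoff bound. Recall from the sampling algorithm $\mc{S}$ that for each layer $i\in [d]$, the unique variable in $X^{(i)}$ with any chance of being mapped into $Y\cup Z$ is $x^{(i)}_{\pi(i-1),\pi(i)}$, and this happens precisely when $i\in A$. Setting $c_i := |U \cap X^{(i)}|$, a given layer $i\ge 2$ therefore contributes a variable to $U|_\rho$ with probability $(1/2)(c_i/4)=c_i/8$ (the two factors being the independent events $i\in A$ and $(\pi(i-1),\pi(i))$ lands on one of the $c_i$ ``$U$-positions'' inside $X^{(i)}$), and distinct layers contribute distinct variables to $U|_\rho$.

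Since each $X^{(i)}$ contains at most four variables, $\sum_i c_i = |U| \ge 400r$, so at least one of the parity classes satisfies $\sum_{i \text{ even}} c_i \ge 200r$ or $\sum_{i \text{ odd}} c_i \ge 200r$. Assume without loss of generality the even case (the odd case is handled identically, with the index $i=1$ treated separately since $\pi(0)=1$ is deterministic; this only shifts the expectation by a constant). For each even $i$, define
\[
X_i \;=\; \mathbf{1}[i\in A]\cdot \mathbf{1}[x^{(i)}_{\pi(i-1),\pi(i)} \in U].
\]
The crucial observation is that $X_i$ depends only on $a_i$ and on the pair $(\pi(i-1),\pi(i))$; as $i$ ranges over $\{2,4,6,\ldots\}$ these pairs involve pairwise disjoint coordinates of $\pi$, and the $a_i$'s are mutually independent. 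Hence $\{X_i\}_{i \text{ even}}$ is a mutually independent family of Bernoullis with $\Pr[X_i=1]=c_i/8$.

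By linearity, $\mu := \sum_{i \text{ even}} c_i/8 \ge 25r$, and since distinct layers contribute distinct $Y\cup Z$ variables, $|U|_\rho| \ge \sum_{i \text{ even}} X_i$. A standard multiplicative Chernoff bound on this independent sum then yields $\Pr\bigl[\sum_{i\text{ even}} X_i \le \mu/2\bigr] \le \exp(-\Omega(\mu)) = 2^{-\Omega(r)}$. Since $\mu/2 \ge 12.5r > 4r$, the claim follows.

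The only mildly subtle point is the parity split. Without it, the events ``layer $i$ contributes'' are only weakly dependent across all $d$ layers because consecutive layers share the $\pi$-coordinate at their common endpoint; a Doob--martingale/Azuma argument revealing $\pi$ one coordinate at a time would give a bound of shape $2^{-\Omega(r^2/d)}$, which is too weak once $r = o(\sqrt{d})$ (and $r$ can indeed be as small as $\Theta(\log d)$ when $\Delta = \Theta(\log d)$). Restricting to a single parity class at the cost of a factor of $2$ in the expected contribution restores exact independence and delivers the desired exponentially small failure probability.
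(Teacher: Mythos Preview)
Your proof is correct and follows essentially the same approach as the paper's: both split on parity to make the per-layer indicator events independent (since layers of a fixed parity use disjoint coordinates of $\pi$), compute a linear-in-$r$ expectation, and apply Chernoff. The only cosmetic difference is that the paper fixes a single contact edge per touched layer (so each indicator has success probability exactly $1/8$), whereas you keep all $c_i$ edges in layer $i$ and work with success probability $c_i/8$; this gives you a larger expectation ($25r$ versus roughly $6r$) but the argument is otherwise identical.
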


We first finish the proof of Part $3$ of Lemma~\ref{lem:rand-rest} assuming this claim.

We say that a restriction $\rho$ is good if we get $|U|_\rho| \geq 4r$. In what follows we will condition on the event that we have a good $\rho$. 

For a restriction $\rho$, for each $i \in [r']$, we can write $L_i|_\rho(Y_i,Z_i)$ as $L'_i|_\rho(Y_i)+$ $L''_i|_\rho(Z_i)$ as $L_i$s are linear polynomials. Therefore we get 
$\prod_{i=1}^{r'} L_i|_\rho(Y',Z') = \sum_{S \subseteq [r']} ~\prod_{i \in S} L'_i|_\rho(Y_i) \cdot \prod_{j \in [r']\setminus S} L''_j|_\rho(Z_j)$. 

Let $L_S$ denote the polynomial $\prod_{i \in S} L'_i|_\rho(Y_i) \cdot \prod_{j \in [r']\setminus S} L''_j|_\rho(Z_j)$. Note that for all $S \subseteq [r']$, $\rank\left(M_{(Y',Z')} (L_S)\right)$ is at most $1$. Therefore, by the subadditivity of matrix rank, we get that $\rank\left(M_{(Y',Z')} \left(\prod_{i=1}^{r'} L_i|_\rho(Y',Z')\right)\right)  \leq 2^{r'} \leq 2^r\,.$
We can now bound $\rank\left(M_{(Y,Z)}\left( f|_\rho\right)\right)$. 

\begin{align*}
\frac{\rank\left(M_{(Y,Z)}\left( f|_\rho\right)\right)}{2^{{(|Y|+|Z|)}/{2}}} & = \frac{\rank\left(M_{(Y,Z)}\left( \prod_{i=1}^{r'} L_i|_\rho \cdot G|_\rho\right)\right)}{2^{{(|Y|+|Z|)}/{2}}} \\
 & = \frac{\rank\left(M_{(Y',Z')}\left( \prod_{i=1}^{r'} L_i|_\rho \right)\right)}{2^{{(|Y'|+|Z'|)}/{2}}} \cdot \frac{\rank\left(M_{(Y'',Z'')}\left(G|_\rho\right)\right)}{2^{{(|Y''|+|Z''|)}/{2}}}\  \\
& \leq \frac{2^r}{2^{|U|_\rho|/2}} \cdot 1 \leq \frac{2^r}{2^{2r}}  =\frac{1}{2^{r}}. 
\end{align*}
where the second equality follows from Lemma~\ref{lem:mult}. 
Therefore, we have 
$\rank\left(M_{(Y,Z)}\left( f|_\rho\right)\right)  \leq 2^{{(|Y|+|Z|)}/{2}}/2^{r} \leq 2^{m+(1/2)- r}$ for any good $\rho$. 
%
As Claim~\ref{cl:many-variables} tells us that $\rho$ is good with probability $1-1/2^{\Omega(r)}$, we are done. \qed.

Assuming Claim~\ref{cl:many-variables} we are done with the proof of Part 3 of Lemma~\ref{lem:rand-rest}. Given below is the proof of Claim~\ref{cl:many-variables}.

\begin{proof}[Proof of Claim~\ref{cl:many-variables}]
We say that a layer $i \in [d]$ is \emph{touched} by $U$ if there is a variable $x^{(i)}_{u,v} \in U$. We call such an $x^{(i)}_{u,v}$ a \emph{contact edge}. Any layer touched by $U$ has at most $4$ contact edges. As $|U| \geq 400r$, $U$ touches at least $100r$ layers. At least half of the layers will be odd numbered or at least half of them will be even numbered. Let us assume without loss of generality that at least half of them are odd numbered. Let these be $\ell_1, \ell_2, \ldots, \ell_R$, where $R \geq 50r $. Let us fix a contact edge $(u_i,v_i)$ per $\ell_i$ for each $i \in [R]$. Let us denote that these edges by $x^{(\ell_i)}_{(u_i,v_i)}$ for $i \in [R]$.  Let us use an indicator random variable $W_i$ which is set to $1$ if $\rho(x^{(\ell_i)}_{(u_i,v_i)}) \in U|_\rho$ and to $0$ otherwise. Note that $\prob{a,\pi}{W_i =1} = 1/8$, where $a,\pi$ are as in the sampling algorithm. This is because, for odd layers, probability that a fixed edge (among $4$ possible contact edges) is picked by $\pi$  is exactly $1/4$ and for a odd layer $\ell$ the probability that $a_\ell =1$ is exactly $1/2$. Moreover, both these events are independent. Therefore $\avg{}{\sum_{i=1}^R W_i} = R/8 \geq 5r$.  Hence we get, $\prob{}{|U|_\rho| \leq 4r} \leq \prob{}{\sum_{i=1}^R W_i \leq 4r}\leq \frac{1}{2^{\Omega(r)}}$, where the last inequality is by the Chernoff bound.
\end{proof}

\paragraph*{Acknowledgement.} We thank the organizers of the NMI Workshop on Arithmetic Complexity 2016 where this collaboration began. Part of this work was done while SC was affiliated to Chennai Mathematical Institute as a graduate student and SC thanks TCS PhD fellowship.
\bibliography{localref}


\end{document}